\newcommand{\R}{\mathbb{R}}
\newcommand{\N}{\mathbb{N}}
\newtheorem{assum}{Assumption}
\newtheorem{definition}[assum]{Definition}
\newtheorem{proposition}[assum]{Proposition}
\newtheorem{theorem}[assum]{Theorem}
\newtheorem{remark}[assum]{Remark}
\newtheorem{coro}[assum]{Corollary}
\newtheorem{problem}[assum]{Problem}
\newtheorem{ag}{A/G Obligation}
\begin{document}

\begin{frontmatter}



\title{Compositional abstraction refinement\\for control synthesis\tnoteref{title footnote}}
\tnotetext[title footnote]{This work was supported by the H2020 ERC Starting Grant BUCOPHSYS, the EU H2020 AEROWORKS project, the EU H2020 Co4Robots project, the Swedish Foundation for Strategic Research, the Swedish Research Council and the KAW Foundation.}

\author[kth]{Pierre-Jean Meyer}
\author[kth]{Dimos V. Dimarogonas}
\address[kth]{ACCESS Linnaeus Center, School of Electrical Engineering,\\KTH Royal Institute of Technology, SE-100 44, Stockholm, Sweden\\\{pjmeyer, dimos\}@kth.se}

\begin{abstract}
This paper presents a compositional approach to specification-guided abstraction refinement for control synthesis of a nonlinear system associated with a method to over-approximate its reachable sets.
Given an initial coarse partition of the state space, the control specification is given as a sequence of the cells of this partition to visit at each sampling time.
The dynamics are decomposed into subsystems where some states and inputs are not observed, some states are observed but not controlled and where assume-guarantee obligations are used on the uncontrolled states of each subsystem.
A finite abstraction is created for each subsystem through a refinement procedure starting from a coarse partition of the state space, then proceeding backwards on the specification sequence to iteratively split the elements of the partition whose coarseness prevents the satisfaction of the specification.
Each refined abstraction is associated with a controller and it is proved that combining these local controllers can enforce the specification on the original system.
The efficiency of the proposed approach compared to other abstraction methods is illustrated in a numerical example.
\end{abstract}

\begin{keyword}
Symbolic control; abstraction refinement; compositional synthesis; hybrid systems.
\end{keyword}

\end{frontmatter}

\section{Introduction}
\label{sec intro}
In the past decades, a lot of work has been devoted towards model checking and plan synthesis of a finite transition system with respect to high-level specifications such as Linear Temporal Logic~\cite{baier2008principles}.
However, when the system is too large to be handled by such methods in reasonable time or when the system is not finite (e.g.\ continuous dynamics), we must rely on abstraction methods that create a smaller finite system related to the concrete system through a behavioral relationship such as simulation, bisimulation or their alternating and approximate versions~\cite{tabuada2009symbolic}.
Despite the significant progress in both the fields of model checking and abstraction, the link between them is not as straightforward as it appears.
For example, it is possible that the specification is satisfied on the concrete system but not on the chosen abstraction, which would thus require looking for a finer abstraction where this satisfaction can be proved.

This observation is the origin of the development of an interface layer named \emph{abstraction refinement}, whose goal is to use both the dynamics and the specification to automatically obtain an abstract system satisfying the specification by iteratively refining an initial coarse abstraction.
This topic has received many contributions, mainly during the 1990s and early 2000s in the context of model checking for hardware design.
As a consequence, these works primarily focus on verification (as opposed to control synthesis) of a formula in fragments of Computation Tree Logic (CTL) for large but finite systems, where the abstraction procedure is thus only used to reduce the complexity.
The most popular approach is called \emph{CounterExample-Guided Abstraction Refinement} (CEGAR) and consists in exploiting the counterexample provided by the model checker when the abstraction does not satisfy the formula in order to see where the abstraction is too coarse.
These counterexamples can either guide the refinement towards splitting the discrete states of the abstraction where the counterexample originates~\cite{clarke2003counterexample} or improving the partial description of a decomposable system by considering more subsystems~\cite{barner2002symbolic,balarin1993iterative,govindaraju2000counterexample}.
Other refinement methods also consider computing under- or over-approximations of the concrete transition system with iteratively increasing accuracy~\cite{lee1997tearing,pardo1998incremental,lind1999stepwise}, using reachability analysis in a bisimulation algorithm~\cite{chutinan2001verification} or its formula-guided version~\cite{yordanov2013formal}, or uniformly splitting the cells of the state partition based on some error measurement for stochastic systems~\cite{esmaeil2013adaptive}.

In this paper, we present a method for specification-guided abstraction refinement for control synthesis of continuous systems.
In this approach, a coarse abstraction of the system is initially considered and iteratively refined (through repartitioning of the state space) in its elements preventing the satisfaction of the specification.
The problem of abstraction refinement for control synthesis has been mostly unexplored by the methods mentioned in the previous paragraph due to the fact that counterexamples of control problems are much harder to use to guide the refinement since they take the form of trees~\cite{henzinger2003counterexample,fu2014abstractions} (instead of single paths for model checking).
Our approach instead considers a specification-guided approach using reachability analysis to identify the elements of the abstraction that need to be refined in order to find a satisfying controller.
In addition, although some of the previously mentioned works consider infinite state space~\cite{yordanov2013formal,clarke2003abstraction,chutinan2001verification,henzinger2003counterexample}, most results of the above initial literature on abstraction refinement assume that the forward or backward reachable sets can be exactly computed, which is rarely true in systems with a continuous state space.
As a consequence, approaches based on reachability analysis to split \emph{good} and \emph{bad} states into two disjoint sets~\cite{chutinan2001verification,yordanov2013formal,clarke2003counterexample,henzinger2003counterexample,fu2014abstractions} lose a part of their efficiency in such cases.
Our approach thus relies on methods to efficiently compute over-approximations of reachable sets, using for example polytopes~\cite{chutinan1999verification}, oriented hyper-rectangles~\cite{stursberg2003efficient}, ellipsoids~\cite{kurzhanskiy2007ellipsoidal}, zonotopes~\cite{girard2008zonotope}, level sets~\cite{mitchell2000level} or the monotonicity property~\cite{angeli_monotone}.
The use of such over-approximations ensures that a controller synthesized on the abstraction can be applied to the original system in order to satisfy the same specification.
The recent years have seen a renewed interest on the topic of abstraction refinement, but with a focus on control synthesis for continuous systems as in the present paper.
Among the most relevant work, we can see several refinement approaches applied to different types of abstraction.
Indeed, while we use over-approximations of finite-time reachable sets to obtain a non-deterministic abstraction, other approaches consider infinite-time reachability analysis~\cite{nilsson2014incremental}, using some feedback controller on the continuous system to obtain a deterministic abstraction~\cite{mattila2015iterative}, or using sets of finite prefixes to describe abstractions of infinite behaviors~\cite{moor2006learning}.
Another abstraction refinement approach is given in~\cite{tazaki2012discrete}, where the refinement is not guided by the specifications as in our approach but by some behavioral relationship (similar to approximate bisimulation) which is not satisfied on the initial coarse abstraction.
Another relevant method is~\cite{gol2014language}, where the refinement approach is applied on an automaton structure related to the specifications instead of on an abstraction of the system as we do.


As any abstraction-based verification and synthesis problem, this approach is limited to low-dimensional systems due to the classical exponential growth of the abstraction size when the dimension of the state space increases.
This paper thus aims at introducing this abstraction refinement method within a compositional approach where a control objective for the whole system is achieved by working on smaller components~\cite{de1998need}, thus widening the range of applications to systems of larger dimensions or systems only equipped with distributed computation capabilities (e.g.\ multi-robot systems).
More specifically, we adapt the compositional abstraction method presented in~\cite{meyer2015adhs} and~\cite{meyer2015phd} to this abstraction refinement framework.
In this approach, the global system is decomposed into subsystems representing partial descriptions of the dynamics, where some of the states and inputs are not observed and some states are observed to increase the precision of the model but not controlled.
Then, an abstraction can be created for each subsystem using the abstraction refinement approach and the composition of the controllers synthesized on each of these abstractions can be used to control the original system.
To reduce the conservatism of this compositional approach, we consider an \emph{assume-guarantee} reasoning~\cite{henzinger1998you}, similarly used in e.g., the recent results~\cite{nilsson2016compositional} to synthesize controlled invariant sets and~\cite{dallal2015compositional} for a symbolic control synthesis using small-gain theorem.
With such reasoning, the abstraction of each subsystem is obtained under the assumption that other subsystems satisfy their own control objectives and the controller synthesized on one subsystem is then used to guarantee that the assumptions of other subsystems hold.
Defining abstraction refinement within a compositional framework has been mostly unexplored in the literature apart from some results on finite systems~\cite{chaki2003automated,henzinger2003thread,bobaru2008automated,komuravelli2012assume} and, to the best of our knowledge, a single contribution on systems with infinite state space (using hybrid automata)~\cite{bogomolov2014assume}.
As opposed to these papers which are all based on the CEGAR method and thus rely on a model checker providing counterexamples to guide the abstraction refinement, our approach only uses reachability analysis in order to detect unsatisfiability of the specification.

The structure of this paper is as follows.
The problem is formulated in Section~\ref{sec problem}.
Section~\ref{sec abstraction} describes the general method to obtain compositional abstractions.
The abstraction refinement algorithm to be applied to each subsystem is presented in Section~\ref{sec refinement}.
Then, Section~\ref{sec compositional} provides the main result that the local controllers can be composed to control the original system.
Finally, a numerical illustration of this method for the temperature regulation in a $8$-room building is presented in Section~\ref{sec simu}.

\section{Problem formulation}
\label{sec problem}

\subsection{Notations}
\label{sub notations}
In this paper, a decomposition of a system into subsystems is considered.
As a result, both scalar and set variables are used as subscript of other variables, sets or functions:
\begin{itemize}
\item lower case letters and scalars give naming information relating a variable, set or function to the subsystem of corresponding index (e.g.\ $x_i$ and $u_i$ are the state and input of the $i$-th subsystem $S_i$);
\item index sets denoted by capital letters are used to represent the projection of a variable on the dimensions contained in this set.
Alternatively, we also use the projection operator $\pi_I$ to project a set or variable on the dimensions contained in $I$ (e.g.\ for $x\in\R^n$ and $I\subseteq\{1,\dots,n\}$, $x_I=\pi_I(x)$).
\end{itemize}

\subsection{System description}
\label{sub system}
We consider a nonlinear control system subject to disturbances described by 
\begin{equation}
\dot x=f(x,u,w),
\label{eq system}
\end{equation}
with state $x\in X\subseteq\R^n$ and bounded control and disturbance inputs $u\in\mathcal{U}\subseteq\R^p$ and $w\in\mathcal{W}\subseteq\R^q$, respectively.
We denote as $\mathbf{U}$ and $\mathbf{W}$ the sets of piecewise continuous control and disturbance inputs  $\mathbf{u}:\mathbb{R}_0^+\rightarrow\mathcal{U}$ and $\mathbf{w}:\mathbb{R}_0^+\rightarrow\mathcal{W}$, respectively.
$\Phi(t,x^0,\mathbf{u},\mathbf{w})$ denotes the state reached by (\ref{eq system}) at time $t\in \mathbb{R}_0^+$ from initial state $x^0\in X$, under control and disturbance functions $\mathbf{u}\in\mathbf{U}$ and $\mathbf{w}\in\mathbf{W}$, respectively.
If the control input is constantly equal to $u\in\mathcal{U}$ over the interval $[0,t]$, we use the notation $\Phi(t,x^0,u,\mathbf{w})$.
The reachable set of (\ref{eq system}) at time $t\in\R_0^+$, from a set of initial states $X^0\subseteq X$ and for a subset of constant control inputs $\mathcal{U}'\subseteq\mathcal{U}$ is defined as
\begin{equation}
RS(t,X^0,\mathcal{U}')=\left\{\Phi(t,x^0,u,\mathbf{w})~\left|~x^0\in X^0,~u\in\mathcal{U}',~\mathbf{w}\in\mathbf{W}\right.\right\}.
\label{eq reachable set}
\end{equation}
Throughout this paper, we assume that we are able to compute over-approximations $\overline{RS}(t,X^0,\mathcal{U}')$ of the reachable set defined in (\ref{eq reachable set}):
\begin{equation}
RS(t,X^0,\mathcal{U}')\subseteq \overline{RS}(t,X^0,\mathcal{U}').
\label{eq reachable set centralized}
\end{equation}
Several methods exist for over-approximating reachable sets for fairly large classes of linear and nonlinear systems, see e.g.~\cite{chutinan1999verification,stursberg2003efficient,kurzhanskiy2007ellipsoidal,girard2008zonotope,mitchell2000level,angeli_monotone}.

Given a sampling period $\tau\in\R_0^+$, the sampled version of system (\ref{eq system}) with piecewise constant control inputs can then be described as a non-deterministic infinite transition system $S=(X,U,\longrightarrow)$ where
\begin{itemize}
\item $X\subseteq\R^{n}$ is the state space,
\item $U=\mathcal{U}$ is the set of inputs,
\item a transition $x\overset{u}{\longrightarrow}x'$ (equivalently written as $x'\in Post(x,u)$) exists if $x'\in RS(\tau,\{x\},\{u\})$, i.e.\ if there exists a disturbance function $\mathbf{w}\in\mathbf{W}$ such that $x'$ can be reached from $x$ exactly in time $\tau$ by applying the constant control $u$ on $[0,\tau]$.
\end{itemize}

Because of the fact that the control objectives in this paper are expressed in discrete time, the continuous-time system (\ref{eq system}) is immediately sampled.
Nevertheless, we note that the analysis is initialized with a continuous-time system to ensure a more general approach where the behavior of the system between sampling times (e.g.\ when $\mathbf{w}$ is not a constant disturbance function) is taken into consideration in the computation of the reachable set (\ref{eq reachable set}).
If a discrete-time system $x^+=F(x,u,w)$ is given instead of (\ref{eq system}), it can be used to replace $S$ and the reachable set operator (\ref{eq reachable set}) can be redefined as $RS(X^0,\mathcal{U}')=F(X^0,\mathcal{U}',\mathcal{W})$.

The choice of the sampling period $\tau$ for general dynamics as in (\ref{eq system}) is a difficult problem which is not treated in this paper and will be the focus of future work.
Some guidelines to choose its value are provided in~\cite{meyer2017ifac} for the particular case of dynamics taking the form: $\dot x=g(x,w)+u$.

\subsection{Specification}
\label{sub spec}
Let the state space $X\subseteq\R^n$ be a $n$-dimensional interval and $P$ a partition of $X$ into smaller intervals.
This partition $P$ needs to be obtained from a Cartesian product of partitions in each dimension $i\in\{1,\dots,n\}$ of $X$.
In what follows, the elements of $P$ are called \emph{cells} of the state space.
We consider a finite sequence of cells $(\sigma^0,\dots,\sigma^r)\in P^{r+1}$ which is used to define a discrete-time specification of the following form:
\begin{equation}
\psi=\sigma^0\wedge\bigcirc\sigma^1\wedge\bigcirc\bigcirc\sigma^2\wedge\dots\wedge\underset{r}{\underbrace{\bigcirc\dots\bigcirc}}\sigma^r,
\label{eq spec}
\end{equation}
where $\bigcirc$ is the temporal operator ``next''~\cite{baier2008principles} corresponding to the time sampling of period $\tau\in\R_0^+$ as in $S$.
The problem of interest can then be formulated as follows.
\begin{problem}
\label{pb}
Find a controller $C:X\rightarrow U$ such that the sampled system $S$ satisfies the specification $\psi$ in (\ref{eq spec}), 
i.e.\ for any finite sequence of states $(x^0,\dots,x^r)$ with $x^0\in\sigma^0$ and such that $x^k\overset{u^k}{\longrightarrow}x^{k+1}$ and $u^k=C(x^k)$ for all $k\in\{0,\dots,r-1\}$, it holds that $x^k\in\sigma^k$ for all $k\in\{0,\dots,r\}$.
\end{problem}

Although focusing only on specifications of the form of $\psi$ in (\ref{eq spec}) may seem restrictive, we can alternatively consider that we are first provided a more general specification expressed as a Linear Temporal Logic (LTL) formula~\cite{baier2008principles} and that $\psi$ is taken as \emph{any} plan satisfying this LTL formula.
Since this paper is mainly focused on the presentation of the general framework for compositional abstraction refinement, we only consider finite plans as in (\ref{eq spec}) which thus correspond to subclasses of LTL formulas satisfiable in finite time, such as co-safe LTL formulas~\cite{kupferman2001model} or formulas defined over finite traces~\cite{de2013linear}.
Such plans can be extracted from these subclasses of LTL formulas as presented in~\cite{meyer2017ifac}.
For more general LTL formulas, satisfying plans take the form of a \emph{lasso} $\psi=\psi_{pref}.(\psi_{suff})^\omega$ composed of two strings in $P$: a finite prefix path $\psi_{pref}$, followed by a finite suffix path $\psi_{suff}$ repeated infinitely often~\cite{baier2008principles}.
Guidelines on how to consider such infinite-length plans within the compositional abstraction refinement approach are presented in~\cite{meyer2017acc}.

\section{Compositional abstractions}
\label{sec abstraction}
In this paper, Problem~\ref{pb} is addressed with a compositional abstraction refinement approach, where the system is decomposed into subsystems, an abstraction is created for each subsystem using abstraction refinement (Section~\ref{sec refinement}) and the obtained local controllers are then composed to obtain a controller of the original system $S$ (Section~\ref{sec compositional}).
In this section, we present the general method adapted from~\cite{meyer2015adhs} and~\cite{meyer2015phd} to obtain compositional abstractions.

\subsection{System decomposition}
\label{sub decomposition}
Consider that we decompose our dynamics (\ref{eq system}) into $m\in\N$ subsystems.
Let $(I_1^c,\dots,I_m^c)$ be a partition of the state indices $\{1,\dots,n\}$ and $(J_1,\dots,J_m)$ a partition of the control input indices $\{1,\dots,p\}$.
As illustrated in Figure~\ref{fig index partition}, subsystem $i\in\{1,\dots,m\}$ can be described using the following sets of indices:
\begin{itemize}
\item $I_i^c$ represents the state components to be controlled;
\item $I_i\supseteq I_i^c$ are all the state components whose dynamics are modeled in the subsystem;
\item $I_i^o=I_i\backslash I_i^c$ are the state components that are only observed but not controlled;
\item $K_i=\{1,\dots,n\}\backslash I_i$ are the remaining unobserved state components considered as external inputs to subsystem $i$;
\item $J_i$ are the input components used for control;
\item $L_i=\{1,\dots,p\}\backslash J_i$ are the remaining control components considered as external inputs to subsystem $i$.
\end{itemize}
The role of all the index sets above can be summarized as follows: for subsystem $i\in\{1,\dots,m\}$, we model the states $x_{I_i}=(x_{I_i^c},x_{I_i^o})$ where $x_{I_i^c}$ are to be controlled using the inputs $u_{J_i}$ and $x_{I_i^o}$ are only observed to increase the precision of the subsystem while $x_{K_i}$ and $u_{L_i}$ are considered as external disturbances.
It is important to note that the subsystems may share common modeled state components (i.e.\ the sets $I_i$ may overlap), though the sets of controlled state components $I_i^c$ and modeled control input components $J_i$ are necessarily disjoints for two subsystems.
Note also that although the control inputs $u_{J_i}$ have an influence over the states $x_{I_i^o}$, these states are said to be \emph{uncontrolled} due to the fact that their behavior is irrelevant to the design of a controller for this subsystem.
\begin{figure}[tb]
\centering
\includegraphics[width=0.5\textwidth]{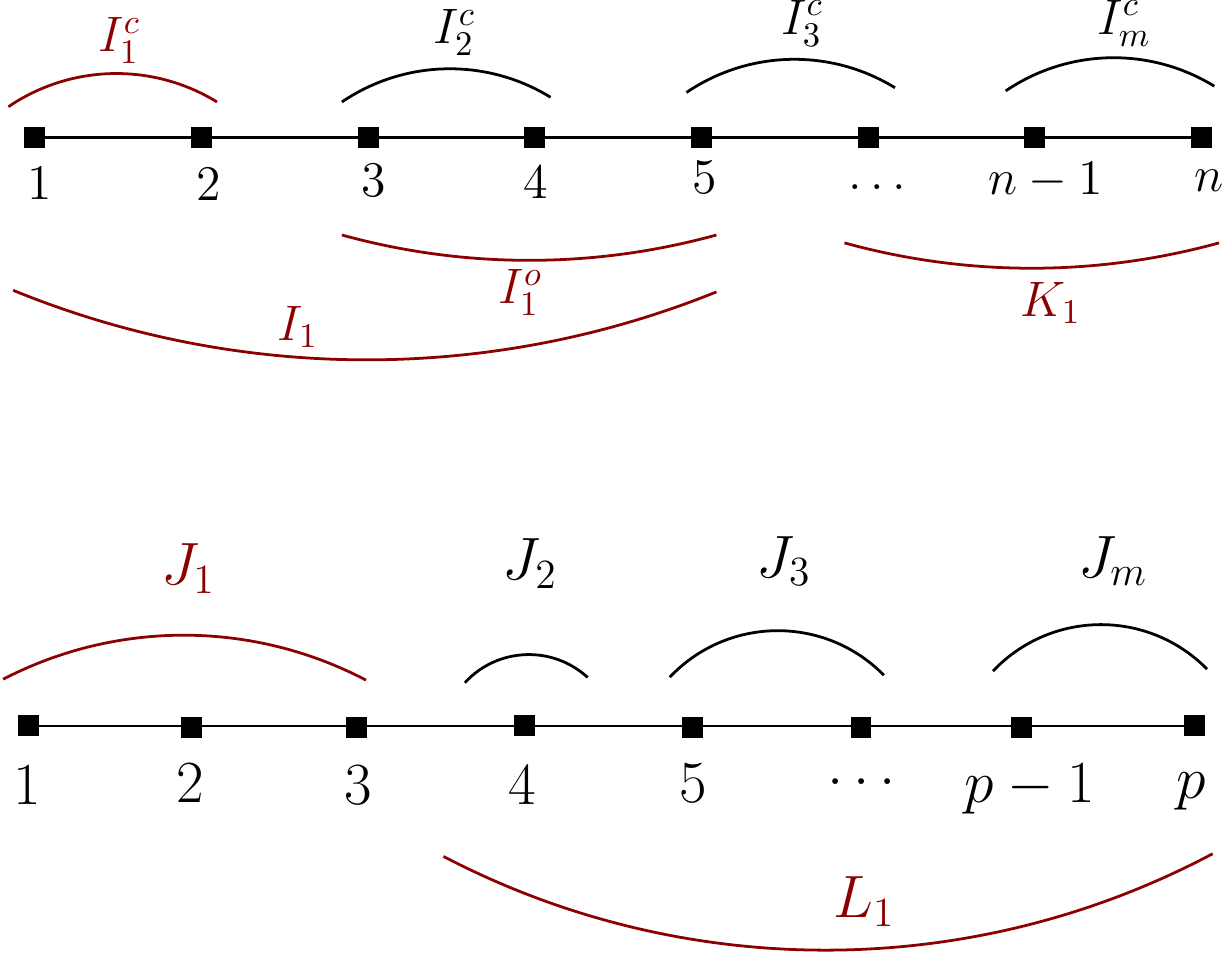}
\caption{Example partition of the state and control dimensions and index sets for subsystem $1$.}
\label{fig index partition}
\end{figure}

%

\subsection{Subsystem's abstraction}
\label{sub subsystem}
For each subsystem $i\in\{1,\dots,m\}$, we want to create a finite abstraction $S_i$ of the sampled system $S$, which models only the state and input components $x_{I_i}$ and $u_{J_i}$, respectively.
$S_i$ will then be used to synthesize a local controller focusing on the satisfaction of the specification for the controlled state components $x_{I_i^c}$ using the modeled control inputs $u_{J_i}$.
The general structure of the abstraction $S_i=(X_i,U_i,\underset{i}{\longrightarrow})$ is as follows.
\begin{itemize}
\item $X_i$ is a partition of $\pi_{I_i}(X)$ into a finite set of intervals called \emph{symbols}. It is initially taken equal to $\pi_{I_i}(P)$ and will then be refined through a procedure detailed in Section~\ref{sec refinement}.
\item $U_i$ is a finite subset of the control set $\pi_{J_i}(\mathcal{U})\subseteq\pi_{J_i}(\R^p)$.
The choice of the discretization of $\pi_{J_i}(\mathcal{U})$ into $U_i$ is free, although it should be noted that having a larger finite set $U_i$ increases the chances to find a satisfying controller while also increasing the computational burden.
\item A transition $s_i\overset{u_i}{\underset{i}{\longrightarrow}}s_i'$ (equivalently written as $s_i'\in Post_i(s_i,u_i)$) exists if $s_i'\cap \pi_{I_i}(RS_i^{AG2}(s_i,u_i))\neq\emptyset$, where the set $RS_i^{AG2}(s_i,u_i)\subseteq X$ defined later in this section is an over-approximation of the reachable set of (\ref{eq system}) based on the partial knowledge available to subsystem $i$.
\end{itemize}

One requirement to be able to compute such an over-approximation of the reachable set is that all variables acting as external disturbances for the considered subsystem need to be bounded.
By definition of system (\ref{eq system}), we know that this is the case for the control and disturbance inputs ($u\in\mathcal{U}$ and $w\in\mathcal{W}$) and in particular the unmodeled control input components satisfy $u_{L_i}\subseteq\pi_{L_i}(\mathcal{U})$.
Moreover, we know that other subsystems will synthesize controllers satisfying the specification for the unobserved and uncontrolled state components ($x_{K_i}$ and $x_{I_i^o}$, respectively) of subsystem $i$: if the state of subsystem $i$ is in the projection $\pi_{I_i}(\sigma^k)$ of some cell $\sigma^k\in P$ involved in the specification $\psi$, then the unobserved states $x_{K_i}$ also start from the projection $\pi_{K_i}(\sigma^k)$ of this cell and the uncontrolled states $x_{I_i^o}$ will reach the next step $\pi_{I_i^o}(\sigma^{k+1})$ of $\psi$.
Assuming that the whole specification $\psi$ in (\ref{eq spec}) is known to each subsystem $i\in\{1,\dots,m\}$, the remark above can be formalized by the following assume-guarantee obligations.
\begin{ag}
\label{ag 1}
For all $x\in X$, $i\in\{1,\dots,m\}$ and $k\in\{0,\dots,r\}$, if $x_{I_i}\in\pi_{I_i}(\sigma^k)$, then $x_{K_i}\in\pi_{K_i}(\sigma^k)$.
\end{ag}
\begin{ag}
\label{ag 2}
For all $i\in\{1,\dots,m\}$, $s_i\in X_i$ and $k\in\{0,\dots,r-1\}$, if $s_i\subseteq\pi_{I_i}(\sigma^k)$, then $\pi_{I_i^o}(RS_i^{AG2}(s_i,u_i))\subseteq\pi_{I_i^o}(\sigma^{k+1})$ for all $u_i\in U_i$.
\end{ag}

The main two differences between these assume-guarantee obligations are that:
\begin{itemize}
\item A/G Obligation~\ref{ag 1} deals with unobserved states $x_{K_i}$, while A/G Obligation~\ref{ag 2} deals with observed but uncontrolled states $x_{I_i^o}$,
\item A/G Obligation~\ref{ag 1} deals with the \emph{initial states} (in $\sigma^k$), while A/G Obligation~\ref{ag 2} deals with the \emph{successors} after one time step (in $\sigma^{k+1}$).
\end{itemize}

\begin{remark}
\label{remark ag}
Unlike traditional assumptions, the assume-guarantee obligations are only taken internally in each subsystem and they do not imply any additional constraints on the overall approach: the control synthesis achieved in each subsystem is exploited to guarantee that the obligations on other subsystems hold.
\end{remark}

We can now finalize the definition of the transition relation of $S_i$, where the over-approximation $RS_i^{AG2}(s_i,u_i)$ is obtained in two steps, each using one of the above assume-guarantee obligations.
Given a symbol $s_i\in X_i$ of $S_i$ with $s_i\subseteq\pi_{I_i}(\sigma^k)$ and a control value $u_i\in U_i$, the first step is to compute an intermediate set $RS_i^{AG1}(s_i,u_i)\subseteq X$ using A/G Obligation~\ref{ag 1} and the operator $\overline{RS}$ in (\ref{eq reachable set centralized}) as follows:
\begin{equation}
RS_i^{AG1}(s_i,u_i)=\overline{RS}(\tau,\sigma^k\cap\pi_{I_i}^{-1}(s_i),\mathcal{U}\cap\pi_{J_i}^{-1}(\{u_i\})).
\label{eq reachable set Si}
\end{equation}
Given a set $s\subseteq X$ such that $s\subseteq\sigma^k$ and a control input $u\in\mathcal{U}$, equations (\ref{eq reachable set}), (\ref{eq reachable set centralized}) and (\ref{eq reachable set Si}) thus give the following inclusions:
\begin{equation}
RS(\tau,s,\{u\})\subseteq \overline{RS}(\tau,s,\{u\})\subseteq RS_i^{AG1}(\pi_{I_i}(s),\pi_{J_i}(u)),
\label{eq reachable set inclusions}
\end{equation}
which implies that the set $RS_i^{AG1}(s_i,u_i)$ is an over-approximation of the reachable set (\ref{eq reachable set}) of $S$ at time $\tau$, from initial states $x\in\sigma^k$ with $\pi_{I_i}(x)\in s_i$ and for control inputs $u\in\mathcal{U}$ with $\pi_{J_i}(u)=u_i$.

The second step towards obtaining $RS_i^{AG2}(s_i,u_i)$ is to use A/G Obligation~\ref{ag 2} to reduce the conservatism of $RS_i^{AG1}(s_i,u_i)$ by exploiting the fact that the control objective of subsystem $i$ only focuses on the controlled state components $x_{I_i^c}$ while the uncontrolled but observed states $x_{I_i^o}$ can be considered to satisfy their specifications due to the control action of other subsystems.
For any $s_i\in X_i$ such that $s_i\subseteq\pi_{I_i}(\sigma^k)$ and $u_i\in U_i$, we define:
\begin{equation}
RS_i^{AG2}(s_i,u_i)=RS_i^{AG1}(s_i,u_i)\cap\{x\in X~|~\pi_{I_i^o}(x)\in\pi_{I_i^o}(\sigma^{k+1})\}.
\label{eq reachable set Si AG2}
\end{equation}
$RS_i^{AG2}$ is thus the same set as the over-approximation $RS_i^{AG1}$ but without the states that violate the specification $\psi$ on the uncontrolled state dimensions $I_i^o$.
This is illustrated in the three cases of Figure~\ref{fig ag2}.
In the first case we have $\pi_{I_i^o}(RS_i^{AG1}(s_i,u_i))\subseteq\pi_{I_i^o}(\sigma^{k+1})$ and A/G Obligation~\ref{ag 2} thus has no effect: $RS_i^{AG2}=RS_i^{AG1}$.
In the second case, the top part of $RS_i^{AG1}$ (in red) is removed after applying A/G Obligation~\ref{ag 2} due to the assumed control action of other subsystems on the state dimensions $I_i^o$.
In the third case, applying A/G Obligation~\ref{ag 2} as in (\ref{eq reachable set Si AG2}) results in $RS_i^{AG2}=\emptyset$, which means that despite the best control actions from other subsystems, the state of the system will always go out of the targeted cell $\sigma^{k+1}$.
This case will need to be treated separately in the next sections by considering as invalid any pair $(s_i,u_i)$ such that $RS_i^{AG2}(s_i,u_i)=\emptyset$.
\begin{figure}[tb]
\centering
\includegraphics[width=0.7\textwidth]{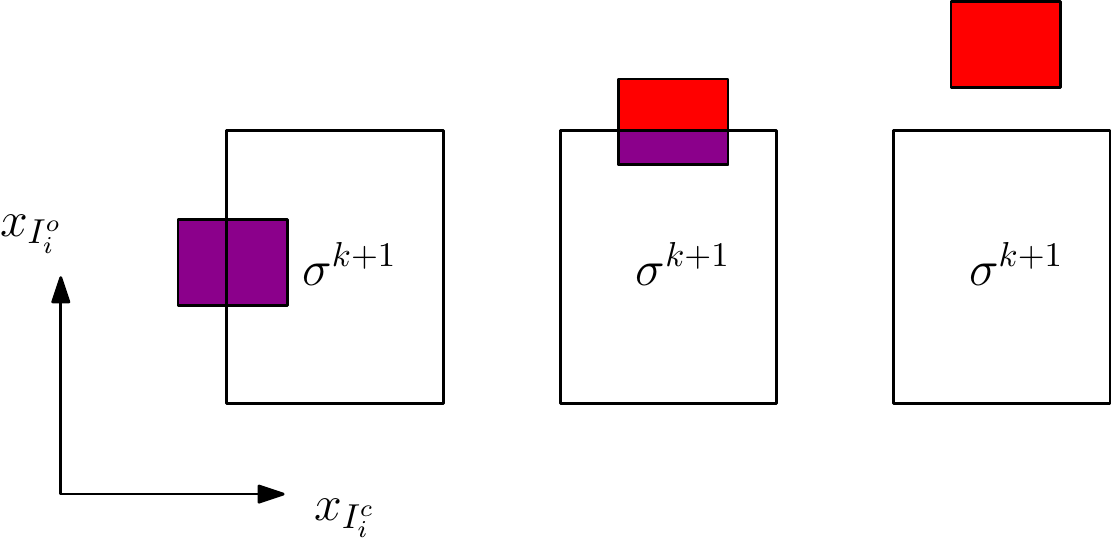}
\caption{Illustration of A/G Obligation~\ref{ag 2} in a 2D example (controlled states $x_{I_i^c}$ on the horizontal axis and uncontrolled but observed states $x_{I_i^o}$ on the vertical axis) with 3 cases of reachable sets colored in red for {\color{red} $RS_i^{AG1}$} and purple for ${\color{red} RS_i^{AG1}}\cap{\color{blue} RS_i^{AG2}}$.}
\label{fig ag2}
\end{figure}

\section{Refinement algorithm}
\label{sec refinement}
In our previous work~\cite{meyer2015phd} where we introduced the compositional abstraction approach summarized in Section~\ref{sec abstraction}, the whole transition system $S_i$ is computed only once for each subsystem as in Section~\ref{sub subsystem} and is mainly based on the knowledge of the dynamics while disregarding most of the influence of the specification $\psi$ (apart from the computation of the sets $RS_i^{AG1}$ and $RS_i^{AG2}$).
As a result, the set of successors $Post_i(s_i,u_i)$ needs to be computed and stored for all symbols $s_i\in X_i$ and all control values $u_i\in U_i$ before being able to start working on the synthesis of a plan satisfying the specification.
In addition, if the chosen partition $P$ of the state space is too coarse, it is likely that no satisfying plan can be found, which then requires the user to pick a new and finer partition and repeat the whole process until the specification is satisfied.

The approach presented in this section aims at addressing both the above problems by:
\begin{itemize}
\item avoiding the computation of the whole abstraction when only a small part is actually relevant to the specification and 
\item automatically adapting the state partition (and hence the set of symbols $X_i$ for each subsystem $i$) if the specification is not initially satisfied.
\end{itemize}

The proposed approach can be described as a compositional abstraction refinement method guided by the control specification $\psi$ in (\ref{eq spec}).
The general idea is that, for each subsystem $i\in\{1,\dots,m\}$, starting from the most coarse abstraction corresponding to the initial partition $X_i=\pi_{I_i}(P)$ as in Section~\ref{sub subsystem}, we identify an element of the abstraction preventing the satisfaction of the specification $\psi$ on the state dimensions $I_i$ of subsystem $i$ and refine it to obtain a more precise abstraction.
This process is then repeated on the new abstraction until $\psi$ can be satisfied by subsystem $i$.

This approach is presented in Algorithm~\ref{algo refinement} and explained below.
For clarity of notations, Algorithm~\ref{algo refinement} is given in the particular case of Assumption~\ref{assum no repeat} below, where for each subsystem, the specification $\psi$ does not visit the same cell twice.
The straightforward modifications required to cover the general case without Assumption~\ref{assum no repeat} are provided at the end of this section.
\begin{assum}
\label{assum no repeat}
For any $k,l\in\{0,\dots,r\}$ such that $k\neq l$ and for all subsystems $i\in\{1,\dots,m\}$ we have $\pi_{I_i}(\sigma^k)\neq\pi_{I_i}(\sigma^l)$.
\end{assum}

\begin{algorithm}[tbh]
  \SetKwFunction{AddToQueue}{AddToQueue}
  \SetKwFunction{FirstInQueue}{FirstInQueue}
  \SetKwFunction{Split}{Split}
  \SetKwFunction{ValidSets}{ValidSets}
  \KwIn{Partition $P$ of $X$,}
  \KwIn{Cell sequence $(\sigma^0,\dots,\sigma^r)\in P^{r+1}$,}
  \KwIn{Discrete control set $U_i$,}
  \KwIn{Partition projection $P_i:P\rightarrow 2^{X_i}$ such that $P_i(\sigma)=\{s_i\in X_i~|~s_i\subseteq\pi_{I_i}(\sigma)\}$.}
  {\bf Initialization: } $X_i=\pi_{I_i}(P)$, $V_i^r=\{\pi_{I_i}(\sigma^r)\}$, $V^r_{iX}=\pi_{I_i}(\sigma^r)$, $Queue=\emptyset$.\\
  \For{$k$ from $r-1$ to $0$}
  {
    $[V_i^k,V^k_{iX},C_i]=\ $\ValidSets$(k,V_{iX}^{k+1})$\\
    $Queue=\ $\AddToQueue($\sigma^k$)\\
    \While{$V_i^k=\emptyset$}
    {
      $\sigma^j=$ \FirstInQueue$(Queue)$\\
      \ForAll{$s_i\in P_i(\sigma^j)\backslash V_i^j$}{$X_i=\ $\Split($s_i$)}
      \For{$l$ from $j$ to $k$}{$[V_i^l,V^l_{iX},C_i]=\ $\ValidSets$(l,V_{iX}^{l+1})$}
    }
  }
  \Return{$X_i$, $\bigcup_{k=0}^{r-1}V_i^k\subseteq X_i$ and $C_i:\bigcup_{k=0}^{r-1}V_i^k\rightarrow {U_i}$}
  \vspace{0.2cm}
\caption{Refinement algorithm for subsystem $i$.\label{algo refinement}}
\end{algorithm}

\paragraph{Inputs}
The refinement method presented in Algorithm~\ref{algo refinement} is described in more details as follows.
We first assume that we are provided with the initial partition $P$ of the state space $X$, the sequence of cells $(\sigma^0,\dots,\sigma^r)\in P^{r+1}$ defining the specification $\psi$ as in Section~\ref{sub spec}, a finite set $U_i$ of control values for subsystem $i$ as in Section~\ref{sub subsystem} and an operator $P_i:P\rightarrow 2^{X_i}$ giving the set of all symbols $s_i\in X_i$ included in the projection $\pi_{I_i}(\sigma)$ of a cell $\sigma\in P$.
For each cell $\sigma^k$ in the sequence $(\sigma^0,\dots,\sigma^r)$ we want to compute the subset $V_i^k$ of symbols in $P_i(\sigma^{k})$ which are considered as valid with respect to the specification $\psi$.
The set $V^k_{iX}$ then corresponds to the projection of $V_i^k$ on the continuous state space $\pi_{I_i}(X)$.

\paragraph{Initialization}
The set of symbols $X_i$ is initially taken as the most coarse partition of the state space $\pi_{I_i}(X)$ (i.e.\ $\pi_{I_i}(P)$) and is then refined during the algorithm when unsatisfaction of $\psi$ is detected.
We proceed backward on the finite sequence $(\sigma^0,\dots,\sigma^r)$, where the target cell $\sigma^r$ is fully valid: $V_i^r=P_i(\sigma^r)=\{\pi_{I_i}(\sigma^r)\}$ and $V_{iX}^r=\pi_{I_i}(\sigma^r)$.
We also initialize a priority queue which will be used to determine which cell of $P$ is to be refined at the next iteration of the algorithm.

\begin{algorithm}[tbh]
  \KwIn{$P$, $(\sigma^0,\dots,\sigma^r)$, $U_i$ and $P_i:P\rightarrow 2^{X_i}$ from Input to Algorithm~\ref{algo refinement},}
  \KwIn{Index of considered cell $k\in\{0,\dots,r-1\}$,}
  \KwIn{Next cell's valid set $V_{iX}^{k+1}$.}
  $V_i^k=\left\{s_i\in P_i(\sigma^{k})~\left|~\exists u_i\in U_i\text{ such that }\emptyset\neq\pi_{I_i}(RS_i^{AG2}(s_i,u_i))\subseteq V_{iX}^{k+1}\right.\right\}$\\
  $V^k_{iX}=\left\{x_i\in \pi_{I_i}(X)~\left|~\exists s_i\in V_i^k\text{ such that }x_i\in s_i\right.\right\}$\\
  $\forall s_i\in V_i^k,~C_i(s_i)$ is chosen in $\left\{u_i\in U_i~\left|~\emptyset\neq\pi_{I_i}(RS_i^{AG2}(s_i,u_i))\subseteq V_{iX}^{k+1}\right.\right\}$\\
  \Return{$V_i^k$, $V^k_{iX}$ and $C_i$}
  \vspace{0.2cm}
\caption{\texttt{ValidSets}. Computes the valid sets and controller for subsystem $i$ at step $k$ of the specification sequence $(\sigma^0,\dots,\sigma^r)\in P^{r+1}$.\label{algo valid}}
\end{algorithm}

\paragraph{External functions}
Algorithm~\ref{algo refinement} calls four external functions.
The function \texttt{ValidSets} looks for the valid symbols and their associated control inputs for a particular step of the specification sequence.
This function is detailed in Algorithm~\ref{algo valid} and explained in the next paragraph.
Functions \texttt{AddToQueue} and \texttt{FirstInQueue} deals with the management of the priority queue and \texttt{Split} represents the refinement of the partition.
Although we provide some guidelines and explanations on the role of these functions in the following paragraphs, they are voluntarily left partially undefined due to their limited influence on the global outcome of the refinement algorithm.
Indeed, if a satisfying refined partition can be found, changing the order in which partition elements are refined or in how many sub-elements they are split only influences how quickly a solution is reached by Algorithm~\ref{algo refinement}.
Some possible heuristics for these functions are detailed for the simulation results in Section~\ref{sub simu results}.
While the main advantage of using a queue to manage the order in which cells are refined is that Algorithm~\ref{algo refinement} always refines the coarsest cells first, other approaches are possible such as associating a cost to the refinement of each cell and refining the cell with minimal cost, as proposed in~\cite{meyer2017ifac}.

\paragraph{Valid sets}
In the main loop of Algorithm~\ref{algo refinement}, assuming we have previously found non-empty valid sets $(V_i^{k+1},\dots,V_i^r)$, we call the function \texttt{ValidSets} for step $k$ of the specification as in Algorithm~\ref{algo valid}.
This function first computes the valid set $V_i^k$ for step $k$ by looking for the symbols in $P_i(\sigma^k)$ for which the over-approximation $RS_i^{AG2}$ of the reachable set is both non-empty and contained in the valid set $V_i^{k+1}$ of the next cell $\sigma^{k+1}$ for at least one value of the discrete control input.
Note that in this particular call when the cell $\sigma^k$ is visited for the first time, $P_i(\sigma^k)$ contains a single element $\pi_{I_i}(\sigma^k)$.
The set $V_{iX}^k$ is taken as the projection of $V_i^k$ on the continuous state space $\pi_{I_i}(X)$.
Then, the controller $C_i$ associates each valid symbol in $V_i^k$ to the \emph{first} of such satisfying control values that has been found.
Algorithm~\ref{algo valid} finally outputs $V_i^k$, $V_{iX}^k$ and $C_i$ to Algorithm~\ref{algo refinement}.
Since the cell $\sigma^k$ is considered here for the first time, we also add it to the priority queue with the function \texttt{AddToQueue}.

\paragraph{Refinement and update}
If the valid set $V_i^k$ is empty, we select (with function \texttt{FirstInQueue}) the first cell $\sigma^j$ of the priority queue and \emph{refine} it.
The refinement is achieved by the function \texttt{Split} and consists in uniformly splitting all the invalid symbols of $P_i(\sigma^j)$ into a number of identical subsymbols (e.g.\ $2$ in each state dimension in $I_i$).
After this, we need to update the valid sets $V_i^j$ and $V_{iX}^j$ and controller $C_i$ of the refined cell $\sigma^j$ using the \texttt{ValidSets} function.
The possibly larger valid set $V_{iX}^j$ obtained after this refinement can then induce a larger valid set at step $j-1$, which in turns influences the following steps.
The refinement and update of the valid set at step $j$ thus requires an update (using function \texttt{ValidSets}) for all other cells from $\sigma^{j-1}$ to $\sigma^k$.
The refined cell $\sigma^j$ can then be moved to any other position in the priority queue (here assumed to be handled by the function \texttt{FirstInQueue}) and these steps are repeated until $V_i^k\neq\emptyset$.
Note that in real implementations, a stopping condition should be added to escape the \emph{while} loop in case the smallest allowed level of partitioning is reached after a certain number of unsuccessful iterations.

\paragraph{Outputs}
The algorithm provides three outputs.
The first one is the refined partition $X_i$ for subsystem $i$.
The second one gathers the sets $V_i^k\subseteq P_i(\sigma^k)\subseteq X_i$ of valid symbols for all $k\in\{0,\dots,r\}$.
Finally, the controller $C_i$ associates a unique control value to each valid symbol.
Note that there is a single value per valid symbol due to the fact that in the presented version of the refinement algorithm, we do not compute the whole set of satisfying control values for a valid symbol, but instead stop looking as soon as one is found and thus avoiding unnecessary computation time.
\begin{remark}
\label{remark Ci}
A potentially interesting direction to be explored in the future is to try to combine the abstraction refinement to some cost minimization problem on the control input.
For this, an alternative version of Algorithm~\ref{algo valid} could be proposed where the whole set $\{u_i\in U_i~|~\emptyset\neq\pi_{I_i}(RS_i^{AG2}(s_i,u_i))\subseteq V_{iX}^{k+1}\}$ of satisfying controls would be computed and thus the output would be a non-deterministic controller $C_i:\bigcup_{k=0}^{r-1}V_i^k\rightarrow 2^{U_i}$.
Once Algorithm~\ref{algo refinement} terminates, we could then select the values which are optimal according to the considered minimization problem.
\end{remark}

As mentioned above, Algorithm~\ref{algo refinement} is presented in the simpler case of Assumption~\ref{assum no repeat}, where the projection of the specification $\psi$ on the state space of each subsystem does not have any duplicated element.
However, the general case without Assumption~\ref{assum no repeat} can easily be covered with the following modifications: 
\begin{itemize}
\item \texttt{AddToQueue} should be adapted so that duplicated cells only appear once in the queue;
\item the controller $C_i$ should not only depend on the current symbol $s_i\in X_i$ but also on the position $k\in\{0,\dots,r\}$ in the specification to know which next cell to target (e.g.\ if $\pi_{I_i}(\sigma^k)=\pi_{I_i}(\sigma^l)$ for some $k\neq l$, $C_i(s_i,k)$ aims towards $\sigma^{k+1}$ while $C_i(s_i,l)$ needs to target $\sigma^{l+1}\neq\sigma^{k+1}$).
\end{itemize}

\section{Composition}
\label{sec compositional}
Algorithm~\ref{algo refinement} in Section~\ref{sec refinement} is applied to each subsystem $i\in\{1,\dots,m\}$ separately.
In this section, we then show that combining the controllers $C_i$ of all subsystems results in a global controller solving Problem~\ref{pb} by ensuring that the sampled system $S$ satisfies the specification $\psi$.

\subsection{Operator for partition composition}
\label{sub compo operator}
Before defining the transition system corresponding to the composition of the abstractions $S_i$ of each subsystem, we need to define an operator to be used in the composition of sets of symbols (either the refined partition $X_i$ or the valid sets $V_i^k$ obtained in Algorithm~\ref{algo refinement}).
There are two main reasons for the introduction of this new operator:
\begin{itemize}
\item the state space of two subsystems $i$ and $j$ may overlap on some dimensions $I_i\cap I_j$, hence a simple Cartesian product is not possible;
\item the refined partition $X_i$ of subsystem $i$ does not necessarily match the partition of other subsystems on their common state dimensions.
\end{itemize}

Intuitively, given two refined sets $X_i$ and $X_j$ as obtained from Section~\ref{sec refinement} with $I_i\cap I_j\neq\emptyset$, we want their composition to be at least as fine as both partitions $X_i$ and $X_j$, which implies that on the common dimensions $I_i\cap I_j$, the composition needs to be at least as fine as the finest of both partitions $\pi_{I_j}(X_i)$ and $\pi_{I_i}(X_j)$.
To ensure the satisfaction of this condition, we can then define the composition operator $\Cap$ as follows:
\begin{equation}
X_i\Cap X_j=\left\{s\in\pi_{I_i\cup I_j}(2^{X})~\left|~
\begin{tabular}{@{}l@{}}
$\exists s_i\in X_i,~\pi_{I_i}(s)\subseteq s_i,$\\ 
$\exists s_j\in X_j,~\pi_{I_j}(s)\subseteq s_j$
\end{tabular}
\right.\right\},
\label{eq compo operator init}
\end{equation}
which provides all the subsets of $\pi_{I_i\cup I_j}(X)$ whose projections onto the state dimensions $I_i$ and $I_j$ are contained in (or equal to) elements of $X_i$ and $X_j$, respectively.
\begin{proposition}
\label{prop composition covering}
If $X_i$ and $X_j$ are partitions of $\pi_{I_i}(X)$ and $\pi_{I_j}(X)$, respectively, then $X_i\Cap X_j$ defined in (\ref{eq compo operator init}) is a covering of $\pi_{I_i\cup I_j}(X)$, i.e.\ $\underset{s\in X_i\Cap X_j}{\bigcup}s=\pi_{I_i\cup I_j}(X)$.
\end{proposition}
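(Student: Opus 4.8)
The plan is to establish the two inclusions $\bigcup_{s\in X_i\Cap X_j}s\subseteq\pi_{I_i\cup I_j}(X)$ and $\pi_{I_i\cup I_j}(X)\subseteq\bigcup_{s\in X_i\Cap X_j}s$ separately. The first step is a notational remark: for any index set $I\subseteq\{1,\dots,n\}$ one has $\pi_I(2^X)=2^{\pi_I(X)}$, since $\pi_I(A)\subseteq\pi_I(X)$ for every $A\subseteq X$ while, conversely, any $T\subseteq\pi_I(X)$ can be written as $T=\pi_I(\pi_I^{-1}(T)\cap X)$. Applied with $I=I_i\cup I_j$, this shows that every $s\in X_i\Cap X_j$ is a subset of $\pi_{I_i\cup I_j}(X)$, which immediately yields the inclusion $\bigcup_{s\in X_i\Cap X_j}s\subseteq\pi_{I_i\cup I_j}(X)$.

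For the reverse inclusion I would take an arbitrary point $x\in\pi_{I_i\cup I_j}(X)$ and exhibit an element of $X_i\Cap X_j$ containing it. Since $I_i\subseteq I_i\cup I_j$, we have $\pi_{I_i}(x)\in\pi_{I_i}(X)$, and because $X_i$ is a partition of $\pi_{I_i}(X)$ there is a cell $s_i\in X_i$ with $\pi_{I_i}(x)\in s_i$; symmetrically there is $s_j\in X_j$ with $\pi_{I_j}(x)\in s_j$. The singleton $s:=\{x\}$ then belongs to $\pi_{I_i\cup I_j}(2^X)$ (it is the projection of the singleton subset $\{\hat x\}\subseteq X$ for any preimage $\hat x$ of $x$), and it satisfies $\pi_{I_i}(s)=\{\pi_{I_i}(x)\}\subseteq s_i$ and $\pi_{I_j}(s)=\{\pi_{I_j}(x)\}\subseteq s_j$; hence $s\in X_i\Cap X_j$ and $x\in s$. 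This proves $\pi_{I_i\cup I_j}(X)\subseteq\bigcup_{s\in X_i\Cap X_j}s$, and combining the two inclusions gives the claim. Instead of the singleton one may equally well use the larger ``cylinder intersection'' $s=\pi_{I_i\cup I_j}(X)\cap\pi_{I_i}^{-1}(s_i)\cap\pi_{I_j}^{-1}(s_j)$, which also lies in $X_i\Cap X_j$ and contains $x$; this is closer to the covering elements that will be relevant when defining the composed transition system.

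The statement is essentially set-theoretic bookkeeping, so I do not expect a real obstacle; the only point that needs a little care is unwinding the notation $\pi_{I_i\cup I_j}(2^X)$ and confirming that it is the full power set of $\pi_{I_i\cup I_j}(X)$, so that singletons (or cylinder intersections) are legitimate members of $X_i\Cap X_j$. It is worth noting that the argument uses only that $X_i$ and $X_j$ cover $\pi_{I_i}(X)$ and $\pi_{I_j}(X)$, and does not rely on the interval / Cartesian-product structure of $X$ or $P$; that structure will only be needed later to keep the composed sets representable as intervals, not for the covering property itself.
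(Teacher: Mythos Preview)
Your proof is correct and follows essentially the same approach as the paper's own proof: pick a point $x\in\pi_{I_i\cup I_j}(X)$, use that $X_i$ and $X_j$ are partitions to find cells $s_i,s_j$ containing $\pi_{I_i}(x)$ and $\pi_{I_j}(x)$, and conclude that some $s\in X_i\Cap X_j$ contains $x$. The paper's version is simply terser---it only argues the nontrivial inclusion and does not name the witnessing $s$ explicitly---whereas you spell out both inclusions, unpack the notation $\pi_{I_i\cup I_j}(2^X)$, and give the concrete witness $s=\{x\}$ (and the alternative cylinder intersection), which is a useful elaboration.
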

\begin{proof}
Let $x\in\pi_{I_i\cup I_j}(X)$.
Since $X_i$ and $X_j$ are partitions, there exists $s_i\in X_i$ and $s_j\in X_j$ such that $\pi_{I_i}(x)\in s_i$ and $\pi_{I_j}(x)\in s_j$, which implies from (\ref{eq compo operator init}) that there exists $s\in X_i\Cap X_j$ such that $x\in s$.
\end{proof}

Although the set in (\ref{eq compo operator init}) contains symbols defined by the finest of both partitions $X_i$ and $X_j$ as mentioned above, it also contains all the subsets included in these symbols, which is not desired if we want the obtained composition $X_i\Cap X_j$ to be a partition of $\pi_{I_i\cup I_j}(X)$.
We thus remove these extra undesired subsets by updating the set $X_i\Cap X_j$ in (\ref{eq compo operator init}) in the following algorithmic expression:
\begin{equation}
\label{eq compo operator final}
X_i\Cap X_j:=X_i\Cap X_j\backslash\{s\in X_i\Cap X_j~|~\exists s'\in X_i\Cap X_j,~s\varsubsetneq s'\},
\end{equation}
where the new composition is based on the old definition of $X_i\Cap X_j$ in (\ref{eq compo operator init}) from which we remove all the elements contained in but not equal to another element of $X_i\Cap X_j$.
In this way, we ensure that we only keep the largest of the elements satisfying (\ref{eq compo operator init}), leading to the desired partition of $\pi_{I_i\cup I_j}(X)$. 
\begin{proposition}
\label{prop composition partition}
If $X_i$ and $X_j$ are partitions of $\pi_{I_i}(X)$ and $\pi_{I_j}(X)$, respectively, then $X_i\Cap X_j$ defined in (\ref{eq compo operator final}) is a partition of $\pi_{I_i\cup I_j}(X)$.
\end{proposition}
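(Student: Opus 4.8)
The plan is to verify directly the two defining properties of a partition for the collection produced by (\ref{eq compo operator final}): that it covers $\pi_{I_i\cup I_j}(X)$ and that its elements are pairwise disjoint and nonempty. The covering property is essentially Proposition~\ref{prop composition covering} (removing sets that are strictly contained in others cannot destroy a covering), so the genuine content is the disjointness, and this will fall out easily once I obtain an explicit description of which sets actually survive the removal step in (\ref{eq compo operator final}).

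First I would, for each pair $(s_i,s_j)\in X_i\times X_j$, introduce the set $T_{s_i,s_j}=\{x\in\pi_{I_i\cup I_j}(X)\mid \pi_{I_i}(x)\in s_i,\ \pi_{I_j}(x)\in s_j\}$, which is the \emph{largest} subset of $\pi_{I_i\cup I_j}(X)$ meeting the membership condition of (\ref{eq compo operator init}) with the particular witnesses $s_i$ and $s_j$. Two facts are then immediate: every $s$ appearing in (\ref{eq compo operator init}) satisfies $s\subseteq T_{s_i,s_j}$ for a suitable choice of witnesses $s_i\in X_i$, $s_j\in X_j$; and conversely each $T_{s_i,s_j}$ is itself an element of (\ref{eq compo operator init}) (it serves as its own witness). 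So the only candidates for the final collection are the sets $T_{s_i,s_j}$.

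Next I would pin down the surviving elements using that $X_i$ and $X_j$ are partitions. The key observation is a uniqueness statement: if $x\in T_{s_i,s_j}$, then $s_i$ is \emph{the} cell of $X_i$ containing $\pi_{I_i}(x)$ and $s_j$ \emph{the} cell of $X_j$ containing $\pi_{I_j}(x)$. Hence any $s$ in (\ref{eq compo operator init}) with $x\in s$ has its witnesses forced to be exactly $s_i,s_j$, so $s\subseteq T_{s_i,s_j}$; in particular a nonempty $T_{s_i,s_j}$ cannot be strictly contained in any element of (\ref{eq compo operator init}) and therefore survives, whereas every $s\varsubsetneq T_{s_i,s_j}$ in (\ref{eq compo operator init}) (including $s=\emptyset$, once some nonempty $T_{s_i,s_j}$ exists) is removed. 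Assuming $\pi_{I_i\cup I_j}(X)\neq\emptyset$, this identifies the collection in (\ref{eq compo operator final}) with $\{T_{s_i,s_j}\mid s_i\in X_i,\ s_j\in X_j,\ T_{s_i,s_j}\neq\emptyset\}$, a family of nonempty sets.

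Finally I would read off the two partition axioms from this description. Covering: given $x\in\pi_{I_i\cup I_j}(X)$, take the cells $s_i\in X_i$, $s_j\in X_j$ containing $\pi_{I_i}(x)$, $\pi_{I_j}(x)$ (they exist because $X_i,X_j$ are partitions); then $x\in T_{s_i,s_j}$, which is thus nonempty and in the collection. Disjointness: if $x\in T_{s_i,s_j}\cap T_{s_i',s_j'}$, the uniqueness observation forces $s_i=s_i'$ and $s_j=s_j'$, hence $T_{s_i,s_j}=T_{s_i',s_j'}$; so distinct surviving sets are disjoint. I expect the main obstacle to be the clean identification of the surviving elements with the sets $T_{s_i,s_j}$ — specifically the claim that a nonempty $T_{s_i,s_j}$ is maximal in (\ref{eq compo operator init}), which is exactly where the partition hypothesis on $X_i$ and $X_j$ is used; once that is established, disjointness, covering, and the disposal of the spurious element $\emptyset$ all follow at once.
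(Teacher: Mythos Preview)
Your proposal is correct and rests on the same key observation as the paper's proof: because $X_i$ and $X_j$ are partitions, the witnesses $s_i,s_j$ for any element of (\ref{eq compo operator init}) are uniquely determined by any point it contains, which forces overlapping survivors to coincide. The only difference is packaging---you explicitly identify the surviving elements as the nonempty sets $T_{s_i,s_j}$, whereas the paper argues disjointness directly by noting that if $x\in s\cap s'$ then $s\cup s'$ lies in (\ref{eq compo operator init}) (same unique witnesses), so maximality forces $s=s'=s\cup s'$.
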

\begin{proof}
From Proposition~\ref{prop composition covering} and (\ref{eq compo operator final}), we know that $X_i\Cap X_j$ defined in (\ref{eq compo operator final}) is also a covering since an element $s$ of (\ref{eq compo operator init}) is only removed in (\ref{eq compo operator final}) if it is strictly contained in another element $s'\in X_i\Cap X_j$.

Let $x\in\pi_{I_i\cup I_j}(X)$.
Since $X_i$ and $X_j$ are partitions, there exists $s_i\in X_i$ and $s_j\in X_j$ such that $\pi_{I_i}(x)\in s_i$ and $\pi_{I_j}(x)\in s_j$, and we know that $s_i$ and $s_j$ are unique.
Let now $s,s'\in X_i\Cap X_j$ as defined in (\ref{eq compo operator final}) and such that $x\in s\cap s'$.
From (\ref{eq compo operator init}), we thus have $\pi_{I_i}(s)\subseteq s_i$ and $\pi_{I_j}(s)\subseteq s_j$, which implies that $s\cup s'\in X_i\Cap X_j$.
From (\ref{eq compo operator final}), this implies that $s$ and $s'$ can only be in $X_i\Cap X_j$ if $s=s'=s\cup s'$.
\end{proof}

In the particular case where $I_i$ and $I_j$ do not have any common state dimension ($I_i\cap I_j=\emptyset$), this composition takes the simpler form of a Cartesian product:
$$X_i\Cap X_j=\left\{s\in\pi_{I_i\cup I_j}(2^{X})~|~\pi_{I_i}(s)\in X_i,~\pi_{I_j}(s)\in X_j\right\},$$
where the projections on the dimensions of $I_i$ and $I_j$ need to be exactly an element of $X_i$ and $X_j$, respectively.

\subsection{Composed transition system}
\label{sub compo transition}
We now define the transition system $S_c=(X_c,U_c,\underset{c}{\longrightarrow})$ as the composition of the abstractions of the subsystems for all $i\in\{1,\dots,m\}$ obtained in Algorithm~\ref{algo refinement}.
$S_c$ contains the following elements:
\begin{itemize}
\item $X_c=X_1\Cap\dots\Cap X_m$ is the composition as per (\ref{eq compo operator final}) of the refined partitions for each subsystem.
From Proposition~\ref{prop composition partition}, we know that $X_c$ is a partition of $X$.
\end{itemize}
Due to (\ref{eq compo operator init}), the projection $\pi_{I_i}(s)$ of $s\in X_c$ does not necessarily correspond to a symbol of $X_i$.
However, we know (see proof of Proposition~\ref{prop composition partition}) that there exists a unique symbol $s_i\in X_i$ containing this projection.
Therefore, for each $i\in\{1,\dots,m\}$, we define the decomposition function $d_i:X_c\rightarrow X_i$ such that $d_i(s)=s_i$ is the unique symbol $s_i\in X_i$ satisfying $\pi_{I_i}(s)\subseteq s_i$.
\begin{itemize}
\item $U_c=U_1\times\dots\times U_m$ is the composition of the discretized control sets (which is a simple Cartesian product since they are defined on disjoint dimensions).
\item $\forall s,s'\in X_c,~u\in U_c,~s\overset{u}{\underset{c}{\longrightarrow}}s'\Longleftrightarrow\forall i\in\{1,\dots,m\},~u_{J_i}=C_i(d_i(s))$ and $d_i(s)\overset{u_{J_i}}{\underset{i}{\longrightarrow}}d_i(s')$.
\end{itemize}

Intuitively, the transition $s\overset{u}{\underset{c}{\longrightarrow}}s'$ (equivalently written as $s'\in Post_c(s,u)$) exists when the control input $u\in U_c$ is allowed by the local controllers $C_i$ for all $i\in\{1,\dots,m\}$ and the transition in $S_c$ can be decomposed (using the decomposition functions $d_i:X_c\rightarrow X_i$) into existing transitions for all subsystems.
Consider now the controller $C_c:X_c\rightarrow U_c$ defined by composing the controllers $C_i:X_i\rightarrow U_i$ obtained on the abstraction of each subsystem:
\begin{equation}
\label{eq controller compo}
\forall s\in X_c,~C_c(s)=(C_1(d_1(s)),\dots,C_m(d_m(s))).
\end{equation}
The definition of the transition relation of $S_c$ can then be reformulated as:
$$\forall s,s'\in X_c,~u=C_c(s),~s\overset{u}{\underset{c}{\longrightarrow}}s'\Longleftrightarrow\forall i\in\{1,\dots,m\},~d_i(s)\overset{u_{J_i}}{\underset{i}{\longrightarrow}}d_i(s'),$$
which emphasizes the fact that the composed system $S_c$ defined in this section is restricted to the control inputs allowed by the controller $C_c$.
This can also be expressed by the fact that the set $U_c(s)=\{u\in U_c~|~Post_c(s,u)\neq\emptyset\}$ of allowed controls from a symbol $s$ is included in the singleton $\{C_c(s)\}$.
We prove later (Corollary~\ref{coro non blocking} in Section~\ref{sub compo result}) that we actually have the equality $U_c(s)=\{C_c(s)\}$.

\subsection{Main result}
\label{sub compo result}
To control the original model $S$ with the above controller (\ref{eq controller compo}), the systems $S=(X,U,\longrightarrow)$ and $S_c=(X_c,U_c,\underset{c}{\longrightarrow})$ must satisfy a \emph{feedback refinement relation} defined below and adapted from~\cite{reissig2016}.
In the case where $X_c$ is a partition of $X$ as in this paper, such relation corresponds to a particular case of an \emph{alternating simulation relation}~\cite{tabuada2009symbolic} where $S$ and $S_c$ apply the same control inputs.
\begin{definition}[Feedback refinement]
\label{def simulation}
A map $H:X\rightarrow X_c$ is a feedback refinement relation from $S$ to $S_c$ if it holds: $\forall x\in X,~s=H(x),~\forall u\in U_c(s)\subseteq U,~\forall x'\in Post(x,u),~H(x')\in Post_c(s,u)$.
\end{definition}

This definition means that for any pair $(x,s)$ of matching state and symbol and any control $u$ of the abstraction $S_c$, any behavior of $S$ using this control is matched by a behavior of $S_c$ with the same control.
As a consequence, if a controller is synthesized so that $S_c$ satisfies some specification, then this controller can be \emph{refined} (using the relation $H$ as in Definition~\ref{def simulation}) into a controller ensuring that $S$ satisfies the same specification.
In what follows, we provide such a relation based on the partition $X_c$ of $X$.
\begin{theorem}
\label{th composed input}
The map $H:X\rightarrow X_c$ such that $H(x)=s\Leftrightarrow x\in s$ is a feedback refinement relation from $S$ and $S_c$.
\end{theorem}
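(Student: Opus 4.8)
The plan is to verify directly that the map $H$ defined by $H(x)=s \Leftrightarrow x\in s$ satisfies the condition in Definition~\ref{def simulation}. First I would note that $H$ is well-defined as a function: since $X_c$ is a partition of $X$ (by Proposition~\ref{prop composition partition}), every $x\in X$ lies in exactly one symbol $s\in X_c$, so $H(x)$ is unambiguously determined. Now fix $x\in X$, let $s=H(x)$ so that $x\in s$, fix a control $u\in U_c(s)$, and fix $x'\in Post(x,u)$; the goal is to show $H(x')\in Post_c(s,u)$, i.e.\ that $s\overset{u}{\underset{c}{\longrightarrow}}H(x')$.

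The core of the argument is to push the transition $x\overset{u}{\longrightarrow}x'$ of $S$ down to each subsystem abstraction $S_i$ via the decomposition functions $d_i$. Let $s_i=d_i(s)$, so $\pi_{I_i}(s)\subseteq s_i$ and hence $x_{I_i}=\pi_{I_i}(x)\in s_i$. Since $u\in U_c(s)$ and $U_c(s)\subseteq\{C_c(s)\}$ by the discussion following the definition of $S_c$, we have $u=C_c(s)$ and therefore $u_{J_i}=C_i(d_i(s))=C_i(s_i)$; in particular $s_i$ is a valid symbol, say $s_i\in V_i^k$ with $s_i\subseteq\pi_{I_i}(\sigma^k)$, and by the definition of $C_i$ in Algorithm~\ref{algo valid} we have $\emptyset\neq\pi_{I_i}(RS_i^{AG2}(s_i,u_i))\subseteq V_{iX}^{k+1}$. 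The key inclusion to invoke is (\ref{eq reachable set inclusions}): because $x\in s\subseteq\sigma^k$ (here I would use A/G Obligation~\ref{ag 1}, which guarantees the full cell $\sigma^k$ is the relevant set of initial states consistent with $x_{I_i}\in\pi_{I_i}(\sigma^k)$) and $\pi_{J_i}(u)=u_i$, the concrete successor satisfies $x'\in RS(\tau,\{x\},\{u\})\subseteq RS_i^{AG1}(s_i,u_i)$. To land in $RS_i^{AG2}$ rather than merely $RS_i^{AG1}$, I would argue that $x'_{I_i^o}\in\pi_{I_i^o}(\sigma^{k+1})$: this follows from A/G Obligation~\ref{ag 2} applied to $s_i\subseteq\pi_{I_i}(\sigma^k)$, which gives $\pi_{I_i^o}(RS_i^{AG2}(s_i,u_i))\subseteq\pi_{I_i^o}(\sigma^{k+1})$, combined with the fact that the definition of the $S_i$-transitions only ever keeps successors inside $RS_i^{AG2}$. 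Hence $x'_{I_i}\in\pi_{I_i}(RS_i^{AG2}(s_i,u_i))\subseteq V_{iX}^{k+1}$, which by definition of $V_{iX}^{k+1}$ means there is a (unique) symbol $s_i'\in V_i^{k+1}\subseteq X_i$ with $x'_{I_i}\in s_i'$, and moreover $s_i'\cap\pi_{I_i}(RS_i^{AG2}(s_i,u_i))\neq\emptyset$, so $d_i(s)=s_i\overset{u_i}{\underset{i}{\longrightarrow}}s_i'$.

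It remains to check that $s_i'=d_i(H(x'))$ for each $i$, after which the definition of $\underset{c}{\longrightarrow}$ yields $s\overset{u}{\underset{c}{\longrightarrow}}H(x')$ directly. Writing $s'=H(x')\in X_c$, we have $x'\in s'$, so $x'_{I_i}=\pi_{I_i}(x')\in\pi_{I_i}(s')\subseteq d_i(s')$; since $d_i(s')$ is by definition the unique symbol of $X_i$ containing $\pi_{I_i}(s')$ and $s_i'$ is a symbol of $X_i$ containing $x'_{I_i}$, the fact that $X_i$ is a partition forces $s_i'=d_i(s')$. Combining the per-subsystem transitions $d_i(s)\overset{u_{J_i}}{\underset{i}{\longrightarrow}}d_i(s')$ with $u_{J_i}=C_i(d_i(s))$ for all $i$ gives exactly $s\overset{u}{\underset{c}{\longrightarrow}}s'$, i.e.\ $H(x')\in Post_c(s,u)$, as required.

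The main obstacle I anticipate is bookkeeping rather than conceptual: one must carefully track that the index $k$ such that $x_{I_i}\in\pi_{I_i}(\sigma^k)$ is consistent across the two A/G obligations and with the validity index of $s_i$, and handle the point that $\pi_{I_i}(s)$ need not itself be an element of $X_i$ (this is precisely why $d_i$ is needed, and why Proposition~\ref{prop composition partition} and the uniqueness argument in its proof are invoked). A secondary subtlety is justifying the step from "$x'\in RS_i^{AG1}$ and $x'_{I_i^o}\in\pi_{I_i^o}(\sigma^{k+1})$" to "$x'_{I_i}\in\pi_{I_i}(RS_i^{AG2})$", which relies on the defining intersection (\ref{eq reachable set Si AG2}) together with the fact that $\pi_{I_i}$ commutes appropriately with that intersection on the $I_i^o$ coordinates.
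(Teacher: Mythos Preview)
Your argument has a genuine gap at the step where you pass from $x'\in RS_i^{AG1}(s_i,u_i)$ to $x'_{I_i}\in\pi_{I_i}(RS_i^{AG2}(s_i,u_i))$. You propose to obtain $x'_{I_i^o}\in\pi_{I_i^o}(\sigma^{k+1})$ by invoking A/G Obligation~\ref{ag 2}, but that obligation says only that $\pi_{I_i^o}(RS_i^{AG2}(s_i,u_i))\subseteq\pi_{I_i^o}(\sigma^{k+1})$, which is essentially the \emph{definition} of $RS_i^{AG2}$ in (\ref{eq reachable set Si AG2}); it tells you nothing about a concrete successor $x'$ until you already know $x'\in RS_i^{AG2}$, which is precisely what you are trying to establish. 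The appeal to ``the definition of the $S_i$-transitions only ever keeps successors inside $RS_i^{AG2}$'' does not help either, because $x'$ is a successor in $S$, not in $S_i$; the membership $x'_{I_i}\in\pi_{I_i}(RS_i^{AG2})$ is the conclusion you need, not an available hypothesis. As Remark~\ref{remark ag} stresses, the A/G obligations are not external assumptions that can be invoked directly on trajectories of $S$: they must be \emph{discharged} using the controllers of the other subsystems.

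The missing idea, and the route the paper takes, is a cross-subsystem argument: one first shows that $x'\in\sigma^{k+1}$ globally. Suppose not; then some coordinate $l$ satisfies $x'_l\notin\pi_l(\sigma^{k+1})$, and since the sets $I_j^c$ partition $\{1,\dots,n\}$ there is a unique subsystem $j$ with $l\in I_j^c$. For \emph{that} subsystem one has $x'\in RS_j^{AG1}(s_j,u_{J_j})$ with $x'_{I_j^c}\notin\pi_{I_j^c}(\sigma^{k+1})$, which contradicts the choice $u_{J_j}=C_j(s_j)$ made in Algorithm~\ref{algo valid} (the validity condition forces the $I_j^c$-projection of the reachable set into $\pi_{I_j^c}(\sigma^{k+1})$). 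Once $x'\in\sigma^{k+1}$ is established, the inclusion $x'_{I_i^o}\in\pi_{I_i^o}(\sigma^{k+1})$ is immediate for every $i$, and the rest of your argument (showing $s_i'\cap\pi_{I_i}(RS_i^{AG2}(s_i,u_i))\neq\emptyset$ and identifying $s_i'=d_i(H(x'))$) goes through. In short: the guarantee on the observed-but-uncontrolled coordinates $I_i^o$ of subsystem $i$ must come from the \emph{controllers of the subsystems that actually control them}, not from A/G Obligation~\ref{ag 2} read as an assumption.
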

\begin{proof}
Let $x\in X$, $s=H(x)\in X_c$ and $u\in U_c(s)\subseteq U$.
By definition of $S_c$, we necessarily have $U_c(s)\subseteq\{C_c(s)\}$ for all $s\in X_c$ since $C_c$ restricts the choice of controls in $S_c$.
If $U_c(s)=\emptyset$, then the condition for feedback refinement in Definition~\ref{def simulation} is trivially satisfied.
Otherwise, we have $u=C_c(s)$ defined as in (\ref{eq controller compo}) which necessarily implies that $x\in\sigma^k$ for some $k\in\{0,\dots,r-1\}$.
Let $x'\in Post(x,u)$, $s'=H(x')$ and denote the decompositions of $s$ and $s'$ as $s_i=d_i(s)$ and $s_i'=d_i(s')$ for all $i\in\{1,\dots,m\}$.
By definition of the over-approximation operator $\overline{RS}$ in (\ref{eq reachable set centralized}), we have $x'\in \overline{RS}(\tau,s,\{u\})$.
With the inclusion in (\ref{eq reachable set inclusions}) and the fact that $\pi_{I_i}(s)\subseteq s_i$, we obtain $x'\in RS_i^{AG1}(s_i,u_{J_i})$ for all $i$.
If $x'\in\sigma^{k+1}$, we immediately have $x_{I_i}'\in s_i'\cap\pi_{I_i}(RS_i^{AG2}(s_i,u_{J_i}))$ and this intersection is thus non-empty, which implies that $s_i'\in Post_i(s_i,u_{J_i})$ for all $i$.
Then, by definition of the transition relation of $S_c$, we have $s'\in Post_c(s,u)$.
On the other hand, if $x'\notin\sigma^{k+1}$, then there exists $l\in\{1,\dots,n\}$ such that $x_l'\notin\pi_l(\sigma^{k+1})$ and there exists a unique subsystem $j\in\{1,\dots,m\}$ such that $l\in I_j^c$.
Therefore we have $x_{I_j^c}'\notin\pi_{I_j^c}(\sigma^{k+1})$ and then $\pi_{I_j}(RS_j^{AG1}(s_j,u_{J_j}))\nsubseteq\pi_{I_j}(\sigma^{k+1})$.
This implies that $u_{J_j}\neq C_j(s_j)$ which contradicts the fact that $u=C_c(s)$.
This case ($x'\notin\sigma^{k+1}$) thus cannot happen and this concludes the proof of the feedback refinement.
\end{proof}

The result in Theorem~\ref{th composed input} thus confirms that restricting the over-approximations with the Assume/Guarantee Obligations~\ref{ag 1} and~\ref{ag 2} is reasonable since it preserves the feedback refinement relation (i.e.\ using the controls of $S_c$, all behaviors of $S$ are matched by behaviors of $S_c$) while allowing us to reduce the conservatism of the approach compared to the general case without assume-guarantee obligations.
The proof of Theorem~\ref{th composed input} can also be used to show that the composed system $S_c$ is non-blocking, i.e.\ if $C_c(s)$ is defined then $Post_c(s,C_c(s))\neq\emptyset$.
\begin{coro}
\label{coro non blocking}
$U_c(s)=\{C_c(s)\}$ for all $s\in X_c$.
\end{coro}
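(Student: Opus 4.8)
The plan is to derive Corollary~\ref{coro non blocking} as an easy consequence of the proof of Theorem~\ref{th composed input}. We already know from the definition of $S_c$ that $U_c(s)\subseteq\{C_c(s)\}$ for every $s\in X_c$, so the only thing left to establish is the reverse inclusion, i.e.\ that $C_c(s)\in U_c(s)$ whenever $C_c(s)$ is defined; equivalently, that $Post_c(s,C_c(s))\neq\emptyset$.

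First I would fix $s\in X_c$ and set $u=C_c(s)$. Since $C_c$ is defined via (\ref{eq controller compo}) from the local controllers $C_i$, the very fact that $u$ is defined means that each $C_i(d_i(s))$ is defined, which (by the construction of the valid sets in Algorithm~\ref{algo valid}) forces $d_i(s)\subseteq\pi_{I_i}(\sigma^k)$ for the appropriate step $k$, and in fact $s\subseteq\sigma^k$ for some $k\in\{0,\dots,r-1\}$. Pick any concrete state $x\in s\subseteq\sigma^k$ (such $x$ exists because cells are nonempty intervals), and pick any $x'\in Post(x,u)$; such $x'$ exists because $S$ comes from a continuous-time system with $\Phi$ well-defined, so $RS(\tau,\{x\},\{u\})\neq\emptyset$. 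Now let $s'=H(x')$ be the unique cell of $X_c$ containing $x'$. The proof of Theorem~\ref{th composed input} shows precisely that, under $u=C_c(s)$, the case $x'\notin\sigma^{k+1}$ is impossible, and hence $x'\in\sigma^{k+1}$ and $s'\in Post_c(s,u)$. This exhibits a successor of $s$ under $u$, so $u\in U_c(s)$, giving $U_c(s)=\{C_c(s)\}$. The remaining trivial case is when $C_c(s)$ is undefined (i.e.\ $s$ is not covered by a valid symbol in any subsystem at any step), in which case $U_c(s)=\emptyset=\{C_c(s)\}$ under the convention that the controller is partial; alternatively one simply notes the statement is about $s$ for which $C_c(s)$ is defined.

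The main obstacle is really just bookkeeping rather than a genuine difficulty: one must make sure that the step index $k$ is consistent across all subsystems (this is where Assumption~\ref{assum no repeat}, or the position-dependent controller in the general case, is used) and that a concrete witness state $x\in s$ and successor $x'\in Post(x,u)$ genuinely exist, so that the argument of Theorem~\ref{th composed input} can be invoked pointwise rather than only on sets. Once those existence facts are in hand, the corollary follows immediately by reusing the contradiction argument already carried out in the theorem's proof, so the write-up can be kept to a few lines that simply point back to that proof.
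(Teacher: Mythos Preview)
Your proposal is correct and follows essentially the same approach as the paper: both note that $U_c(s)\subseteq\{C_c(s)\}$ by construction and then invoke the argument already carried out in the proof of Theorem~\ref{th composed input} to conclude that $Post_c(s,C_c(s))\neq\emptyset$ whenever $C_c(s)$ is defined. Your version is simply more explicit about producing the concrete witnesses $x\in s$ and $x'\in Post(x,u)$ needed to feed into that argument, which the paper leaves implicit.
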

\begin{proof}
We already know that $U_c(s)\subseteq\{C_c(s)\}$ for all $s\in X_c$, so we only need to prove that $C_c(s)\in U_c(s)$ whenever $C_c(s)$ is defined.
This is immediately derived from the proof of Theorem~\ref{th composed input} which states that if $C_c(s)$ exists, then $Post_c(s,C_c(s))\neq\emptyset$.
\end{proof}

These two results can then be exploited to solve Problem~\ref{pb}, where we use the controller $C_c^X:X\rightarrow U$ obtained by combining the controller $C_c:X_c\rightarrow U_c$ of $S_c$ in (\ref{eq controller compo}) and the feedback refinement relation $H:X\rightarrow X_c$ from Theorem~\ref{th composed input}:
\begin{equation}
\label{eq controller S}
\forall x\in X,~C_c^X(x)=C_c(H(x)).
\end{equation}
\begin{theorem}
\label{th controller}
Let $(x^0,\dots,x^r)\in X^{r+1}$ be any finite trajectory of $S$ from an initial state $x^0\in X$ such that $H(x^0)\in V^0_{1}\Cap\dots\Cap V^0_{m}$ and subject to the controller $C_c^X$ in (\ref{eq controller S}), i.e.\ with $x^k\overset{C_c^X(x^k)}{\longrightarrow}x^{k+1}$ for all $k\in\{0,\dots,r-1\}$.
Then we have $x^k\in\sigma^k$ for all $k\in\{0,\dots,r\}$ and $S$ satisfies the specification $\psi$ in (\ref{eq spec}).
\end{theorem}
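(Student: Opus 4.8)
The plan is to prove the statement by induction on $k$, showing simultaneously that $x^k \in \sigma^k$ and that the projected symbol $H(x^k)$ decomposes into valid symbols, i.e.\ $d_i(H(x^k)) \in V_i^k$ for every subsystem $i$. The base case $k=0$ follows from the hypothesis $H(x^0) \in V_1^0 \Cap \dots \Cap V_m^0$: by definition of the composition operator in~(\ref{eq compo operator init}), $\pi_{I_i}(H(x^0))$ is contained in some element of $V_i^0$, and since $V_i^0 \subseteq X_i$ and $d_i(H(x^0))$ is the unique symbol of $X_i$ containing this projection, we get $d_i(H(x^0)) \in V_i^0 \subseteq P_i(\sigma^0)$, hence $x^0_{I_i} \in \pi_{I_i}(\sigma^0)$ for all $i$; since $(I_1^c, \dots, I_m^c)$ partitions $\{1,\dots,n\}$ and $d_i(H(x^0)) \subseteq \pi_{I_i}(\sigma^0)$, this forces $x^0 \in \sigma^0$.

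For the inductive step, assume $x^k \in \sigma^k$ and $d_i(H(x^k)) \in V_i^k$ for all $i$, with $k \le r-1$. First I would invoke Theorem~\ref{th composed input}: the feedback refinement relation gives $H(x^{k+1}) \in Post_c(H(x^k), C_c^X(x^k))$, which by the definition of the transition relation of $S_c$ unpacks into $d_i(H(x^{k+1})) \in Post_i(d_i(H(x^k)), C_i(d_i(H(x^k))))$ for every $i$. Write $s_i = d_i(H(x^k))$ and $s_i' = d_i(H(x^{k+1}))$. Since $s_i \in V_i^k$, the controller $C_i(s_i)$ was chosen in Algorithm~\ref{algo valid} precisely so that $\emptyset \ne \pi_{I_i}(RS_i^{AG2}(s_i, C_i(s_i))) \subseteq V_{iX}^{k+1}$. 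The transition $s_i' \in Post_i(s_i, C_i(s_i))$ means $s_i' \cap \pi_{I_i}(RS_i^{AG2}(s_i, C_i(s_i))) \ne \emptyset$; combining this with the inclusion in $V_{iX}^{k+1}$ and using that $V_{iX}^{k+1}$ is the union of the symbols in $V_i^{k+1}$ (which are elements of the partition $X_i$, hence pairwise disjoint), the symbol $s_i'$ must itself belong to $V_i^{k+1}$. This establishes the second half of the induction hypothesis at step $k+1$.

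It then remains to show $x^{k+1} \in \sigma^{k+1}$. Here I would split the state index set according to which subsystem controls each coordinate: for $l \in \{1,\dots,n\}$, let $j$ be the unique subsystem with $l \in I_j^c$. Since $s_j' = d_j(H(x^{k+1})) \in V_j^{k+1} \subseteq P_j(\sigma^{k+1})$, we have $x^{k+1}_{I_j} \in \pi_{I_j}(\sigma^{k+1})$, and in particular $x^{k+1}_l \in \pi_l(\sigma^{k+1})$. Since the controlled index sets $I_j^c$ cover $\{1,\dots,n\}$, this holds for every coordinate $l$, hence $x^{k+1} \in \sigma^{k+1}$. The induction closes, and applying it up to $k = r$ yields $x^k \in \sigma^k$ for all $k$, which is exactly the satisfaction of $\psi$ in the sense of Problem~\ref{pb}.

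The main obstacle I anticipate is the bookkeeping around the decomposition functions $d_i$ and the composition operator: one must be careful that a symbol of $X_c$ need not project onto an actual symbol of $X_i$ (only into one), so the argument that $s_i' \in V_i^{k+1}$ genuinely requires the disjointness of the refined partition $X_i$ (Proposition~\ref{prop composition partition} and the fact that each $X_i$ is a partition) to promote ``intersects a subset of $V_{iX}^{k+1}$'' to ``is an element of $V_i^{k+1}$.'' A secondary subtlety is that Theorem~\ref{th composed input} only guarantees the feedback refinement property for controls in $U_c(s)$; one needs Corollary~\ref{coro non blocking} (or the non-blocking part of the proof of Theorem~\ref{th composed input}) to know that $C_c^X(x^k) = C_c(H(x^k))$ is actually an admissible control producing a non-empty successor set, so that the relation can be applied along the whole trajectory. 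Everything else is routine once the induction is set up with the right two-part hypothesis.
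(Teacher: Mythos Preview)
Your proposal is correct and follows essentially the same route as the paper: an induction on $k$ showing that the (decomposed) symbol at each step lies in the valid sets $V_i^k$, using the feedback refinement relation (Theorem~\ref{th composed input}), the definition of the composed transition relation, Corollary~\ref{coro non blocking} for non-blocking, and the choice of $C_i$ in Algorithm~\ref{algo valid}. Your write-up is in fact more explicit than the paper's on two points the paper leaves implicit---the partition-disjointness argument that upgrades ``$s_i'$ intersects $V_{iX}^{k+1}$'' to ``$s_i'\in V_i^{k+1}$'', and the use of the covering $(I_1^c,\dots,I_m^c)$ to conclude $x^{k+1}\in\sigma^{k+1}$---but the structure and key ingredients are the same.
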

\begin{proof}
Due to the feedback refinement relation in Theorem~\ref{th composed input}, it is sufficient to prove that the composed system $S_c$ controlled by $C_c$ in (\ref{eq controller compo}) satisfies $\psi$ if it starts in $s^0=H(\mathbf{x}(0))\in V^0_{1}\Cap\dots\Cap V^0_{m}$.
Let $k\in\{0,\dots,r-1\}$ and $s\in X_c$ such that $s\in V^k_{1}\Cap\dots\Cap V^k_{m}$.
The control value $C_c(s)$ in (\ref{eq controller compo}) is thus well defined since we have $d_i(s)\in V^k_i$ for all $i$ and Corollary~\ref{coro non blocking} implies that there exists $s'\in Post_c(s,C_c(s))$.
By definition of $S_c$, this implies that $d_i(s')\in Post_i(d_i(s),C_i(d_i(s)))$ for all $i$.
Then Algorithm~\ref{algo valid} gives that $d_i(s')\in V_i^{k+1}$ for all $i$ and therefore $s'\in V^{k+1}_{1}\Cap\dots\Cap V^{k+1}_{m}$.
\end{proof}

The above result thus states that if Algorithm~\ref{algo refinement} terminates in finite time for all subsystems $i$, the controller $C_c^X$ obtained from composing all subsystem's controllers $C_i$ can be used so that the sampled system $S$ satisfies the specification $\psi$.
On the other hand, if there exists a controller such that $S$ satisfies $\psi$, we cannot always guarantee that Algorithm~\ref{algo refinement} will find partitions $X_i$ for all subsystems $i$ where $\psi$ can be satisfied, due to the fact that the abstractions $S_c$ and $S_i$ are obtained from using \emph{over-approximations} of the reachable set of $S$.

We also wish to emphasize the fact that the approach presented in this paper remains applicable even in the case of unstable dynamics or strongly connected state variables between two subsystems (the state of one subsystem thus creates a large disturbance on the other).
Indeed, such cases would induce very large over-approximations of the reachable sets, thus making the abstraction refinement algorithm unlikely to terminate in reasonable time, but the main result still holds: if Algorithm 1 terminates for all subsystems, the composition of the obtained controllers is a satisfying controller for the original system.

\section{Numerical illustration}
\label{sec simu}
\subsection{System description}
\label{sub simu system}
In this section, we illustrate the proposed compositional refinement procedure on a numerical example representing a temperature regulation problem in an $8$-room building sketched in Figure~\ref{fig ufad}.
The model is inspired by the small-scale experimental building equipped with \emph{UnderFloor Air Distribution} described in~\cite{meyer2014ecc}, where the air in an underfloor plenum is cooled down and sent into each room of the building using controlled fans.
The excess of air in the room is then pushed into a ceiling plenum through exhausts in the fake ceiling and then sent back to the underfloor to be cooled down again.
The walls highlighted in red in Figure~\ref{fig ufad} correspond to open doors and each room is assumed to contain one person whose body heat is considered as a disturbance.
The temperature control of the underfloor is assumed to be realized separately and the model of the building is described by:
\begin{equation}
\label{eq ufad}
\dot T=f(T,u,w),
\end{equation}
where the state $T\in\R^8$ represents the temperature of all rooms, $u\in[-1,0]^8$ is the controlled ventilation in all rooms (using negative values since it has a cooling effect) and $w=[T_u,T_c,T_o,T_b]\in\R^4$ is the disturbance vector containing the underfloor temperature $T_u\in[15,16]\,^{\circ}\mathrm{C}$, the ceiling temperature $T_c\in[26,28]\,^{\circ}\mathrm{C}$, the outside temperature $T_o\in[28,30]\,^{\circ}\mathrm{C}$ and the body heat (assumed to be the same in all rooms) $T_b=37\,^{\circ}\mathrm{C}$.
For simplicity of presentation, we assume that all rooms have the same size and same parameters (wall conduction factor, ventilation factor, \dots)
The temperature variations in room $i\in\{1,\dots,8\}$ can then be described by:
\begin{equation}
\label{eq ufad room}
\frac{dT_i}{dt}=\sum_{j\in\mathcal{N}_i}a_{i,j}(T_j-T_i) + u_ib(T_i-T_u)+c(T_b^4-T_i^4).
\end{equation}
The first term of (\ref{eq ufad room}) models both the conduction through walls and the heat transfers through open doors between room $i$ and a space $j\in\mathcal{N}_i$, where $\mathcal{N}_i$ contains all neighbor rooms of room $i$ as well as the underfloor, ceiling and outside indices $\{u,c,o\}$.
The parameter $a_{i,j}=1*10^{-5}$ is thus taken for a wall between $i$ and $j$, while we consider $a_{i,j}=3*10^{-5}$ for an open door.
The second heat transfer $u_ib(T_i-T_u)$ with $b=2*10^{-4}$ is related to the mass flow rate from the underfloor plenum of temperature $T_u$ to room $i$ of temperature $T_i$, with a ventilation power $u_i\in[-1,0]$.
While each room $i$ is associated to its own ventilation controlled by $u_i$, for the purpose of demonstrating the generality of the proposed approach, we consider that the ventilation in room $6$ is achieved at $75\%$ by $u_6$ and $25\%$ by the control $u_8$ of the neighbor room $8$.
As a result, for $i=6$ in (\ref{eq ufad room}), the second term is replaced by $0.75u_6b(T_6-T_u)+0.25u_8b(T_6-T_u)$.
The last term represents the radiation from the heat source of temperature $T_b=37\,^{\circ}\mathrm{C}$.
Although the chosen parameter $c=10^{-13}$ may appear to be negligible compared to $a_{i,j}$ and $b$, it should be noted that the temperatures in (\ref{eq ufad room}) are to be written in Kelvin degrees, thus resulting in $cT_b^4=9.3*10^{-5}$ which has a similar order of magnitude than other heat transfers.

\begin{figure}[htb]
\centering
\includegraphics[width=0.7\textwidth]{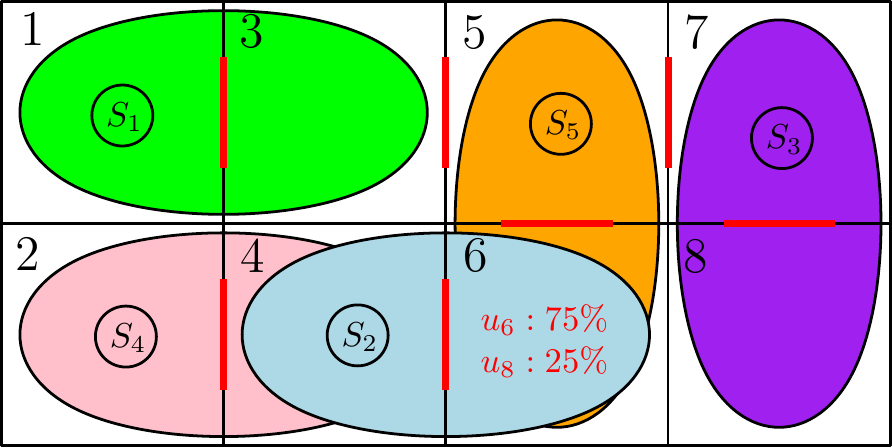}
\caption{Sketch of the $8$-room building decomposed into $5$ subsystems.}
\label{fig ufad}
\end{figure}

The global system (\ref{eq ufad}) is chosen to be decomposed into $5$ subsystems as sketched in Figure~\ref{fig ufad}.
The first $3$ subsystems each cover two rooms $I_1=I_1^c=J_1=\{1,3\}$, $I_2=I_2^c=J_2=\{4,6\}$ and $I_3=I_3^c=J_3=\{7,8\}$ where all the observed states are also controlled ($I_1^o=I_2^o=I_3^o=\emptyset$).
The last $2$ subsystems only aim at controlling a single state each, but also observe an additional state: $I_4=\{2,4\}$, $I_4^c=J_4=\{2\}$, $I_4^o=\{4\}$ and $I_5=\{5,6\}$, $I_5^c=J_5=\{5\}$, $I_5^o=\{6\}$.
Note that the control input $u_8$ (which has a direct influence over the temperature variations in room $6$) will be considered as a disturbance in both subsystems $2$ and $5$ in which the temperature of room $6$ is modeled.

The nonlinear system describing the $8$-room building in (\ref{eq ufad}) and (\ref{eq ufad room}) can be shown to satisfy a monotonicity property as in~\cite{angeli_monotone}, which can be exploited to compute over-approximations of the reachable set, similarly to~\cite{moor2002abstraction} and to~\cite{coogan2015mixed} for the larger class of mixed-monotone systems.

\subsection{Simulation results}
\label{sub simu results}
The considered state space $X=[20,30]^8$ (in Celsius degrees) is partitioned into $5$ elements per dimension, thus resulting in a partition $P$ of $390625$ cells.
The control interval $\mathcal{U}=[-1,0]^8$ is discretized uniformly into $5$ values per dimension: $\{-1,-0.75,-0.5,-0.25,0\}$.
Given an initial state in the cell $[28,30]^8$ and a sampling period $\tau=30$ minutes, our control objective is to reach, within $2$ hours, the conditions describing the following \emph{temperature gradient} (from left to right in the building of Figure~\ref{fig ufad}): $T_1,T_2\in[26,28]$, $T_3,T_4\in[24,26]$, $T_5,T_6\in[22,24]$ and $T_7,T_8\in[20,22]$.
To reach these conditions within $4$ time steps while reducing the energy consumption, we choose the specification $\psi=\sigma^0\sigma^1\sigma^2\sigma^3\sigma^4$ as follows, where the room temperatures are kept in the initial cell as long as possible:
\begin{itemize}
\item $\sigma^0=[28,30]^8$ is the initial cell, 
\item $\sigma^1$ is such that $T_1,T_2,T_3,T_4,T_5,T_6\in[28,30]$ and $T_7,T_8\in[26,28]$,
\item $\sigma^2$ is such that $T_1,T_2,T_3,T_4\in[28,30]$, $T_5,T_6\in[26,28]$ and $T_7,T_8\in[24,26]$,
\item $\sigma^3$ is such that $T_1,T_2\in[28,30]$, $T_3,T_4\in[26,28]$, $T_5,T_6\in[24,26]$ and $T_7,T_8\in[22,24]$,
\item $\sigma^4$ is the final cell described above.
\end{itemize}
Note that the chosen specification $\psi$ does not satisfy Assumption~\ref{assum no repeat}, since we have e.g., $\pi_{I_1}(\sigma^0)=\pi_{I_1}(\sigma^1)=\pi_{I_1}(\sigma^2)=[28,30]^2$.
While Assumption~\ref{assum no repeat} was introduced in Section~\ref{sec refinement} for clarity of notations, the general case without this assumption does not create any problem from an implementation point of view since each element $\pi_{I_i}(\sigma^k)$ is treated independently (e.g.\ with its own partition refinement) even if it represents a previously considered cell.

Algorithm~\ref{algo refinement} is then applied to each subsystem, where the {\tt Split} function uniformly splits a symbol into $2$ subsymbols per dimension and the priority queue is handled as follows: we only refine a cell when no coarser candidate exists and when more than one cell can be refined, we prioritize the one whose last refinement is the oldest.
In Figure~\ref{fig simu}, we display the resulting refined partitions and valid symbols (in red) for each subsystem.
Below, we detail the refinement process in the case of subsystem $3$ in Figure~\ref{fig sub3}.
The bottom left cell $\pi_{I_3}(\sigma^4)$ is fully valid since it is the final step of the specification.
For $\pi_{I_3}(\sigma^3)$, no satisfying control is found to bring the whole cell into $\pi_{I_3}(\sigma^4)$, so it is split into $4$ identical subsymbols, one of which (the bottom left one) is valid.
The next cell $\pi_{I_3}(\sigma^2)$ is in the same situation, but even after splitting it no control can drive any of its subsymbols into the single valid symbol of $\pi_{I_3}(\sigma^3)$.
The next element in $Queue$ is $\pi_{I_3}(\sigma^3)$, so each of its $3$ symbols which are not valid is split into $4$ subsymbols.
Among the obtained $12$ subsymbols of $\pi_{I_3}(\sigma^3)$, $5$ are valid (as in Figure~\ref{fig sub3}) and an update on $\pi_{I_3}(\sigma^2)$ shows that all $4$ symbols of $\pi_{I_3}(\sigma^2)$ can now be controlled towards the new valid set $V_3^3$ of $\pi_{I_3}(\sigma^3)$.
Proceeding with $\pi_{I_3}(\sigma^1)$, we obtain $V_{3X}^1=\pi_{I_3}(\sigma^1)$ after splitting $\pi_{I_3}(\sigma^1)$ once.
The same is then done on $\pi_{I_3}(\sigma^0)$, which also results in $V_{3X}^0=\pi_{I_3}(\sigma^0)$ and thus terminates Algorithm~\ref{algo refinement} since we obtained a non-empty valid set for the initial cell.

\newcommand\proportion{0.49}
\begin{figure}[tbp]
    \centering
    \begin{subfigure}[b]{\proportion\textwidth}
        \includegraphics[width=\textwidth]{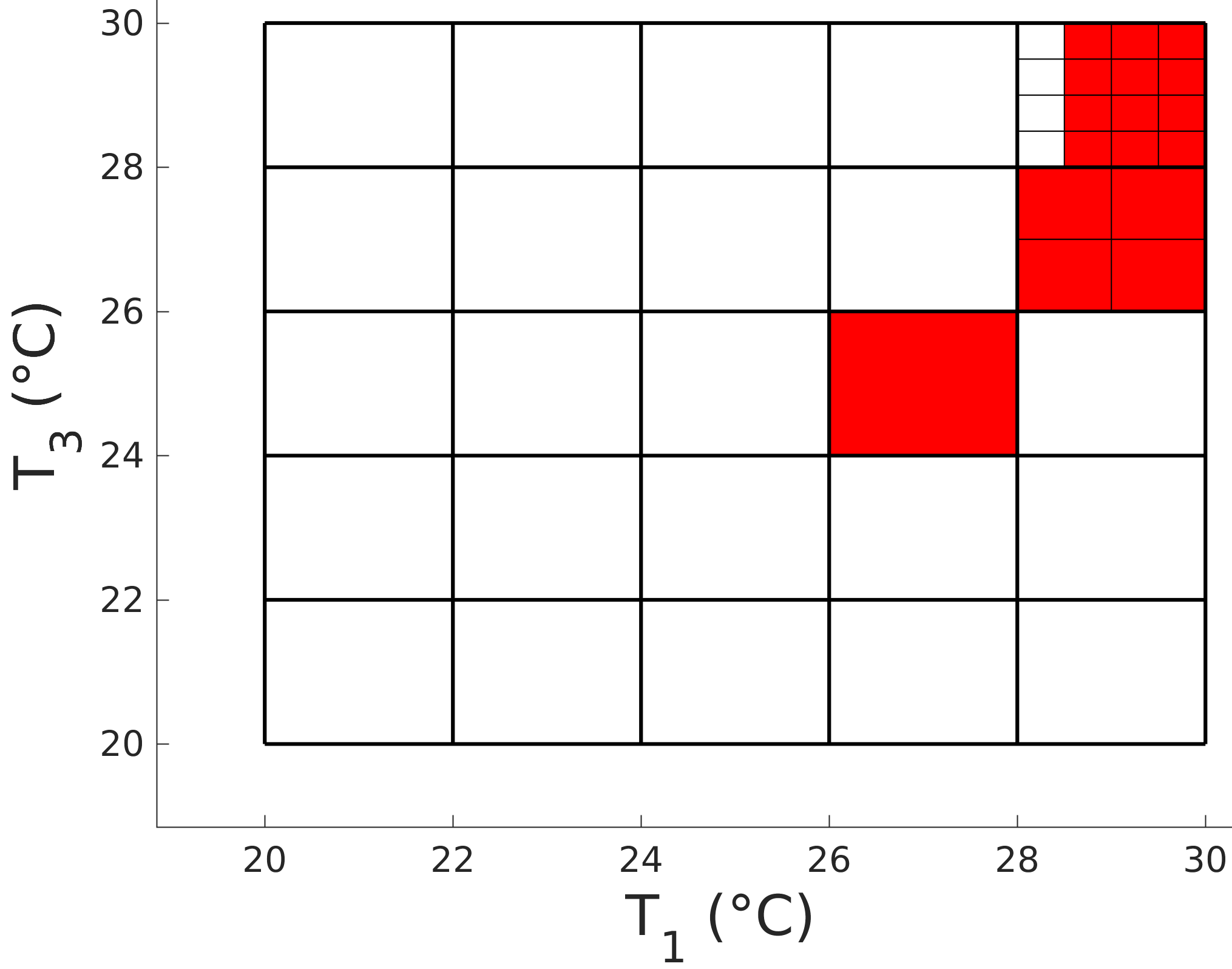}
        \caption{$I_1=I_1^c=J_1=\{1,3\}$}
        \label{fig sub1}
    \end{subfigure}
    \begin{subfigure}[b]{\proportion\textwidth}
        \includegraphics[width=\textwidth]{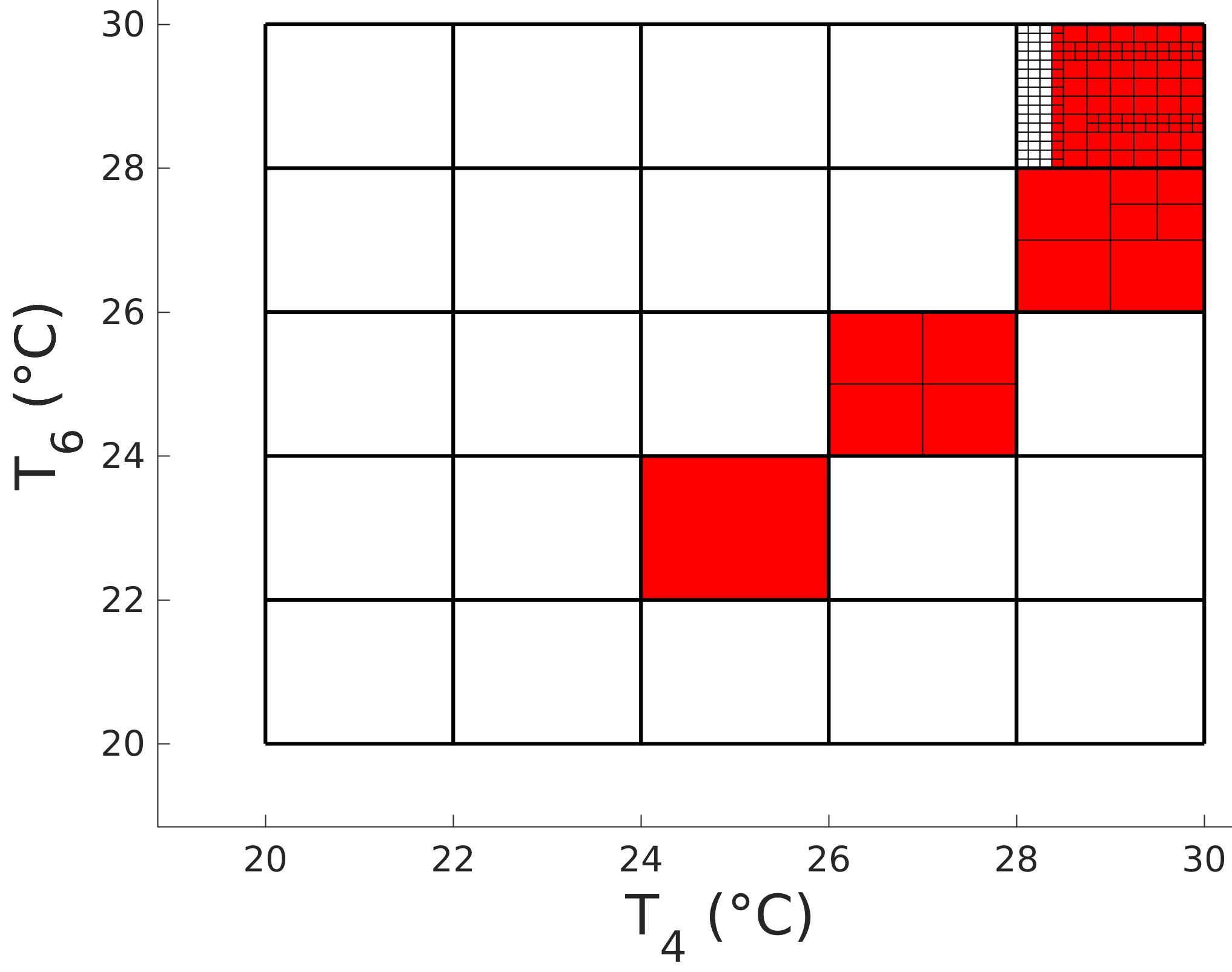}
        \caption{$I_2=I_2^c=J_2=\{4,6\}$}
        \label{fig sub2}
    \end{subfigure}
    
    \begin{subfigure}[b]{\proportion\textwidth}
        \includegraphics[width=\textwidth]{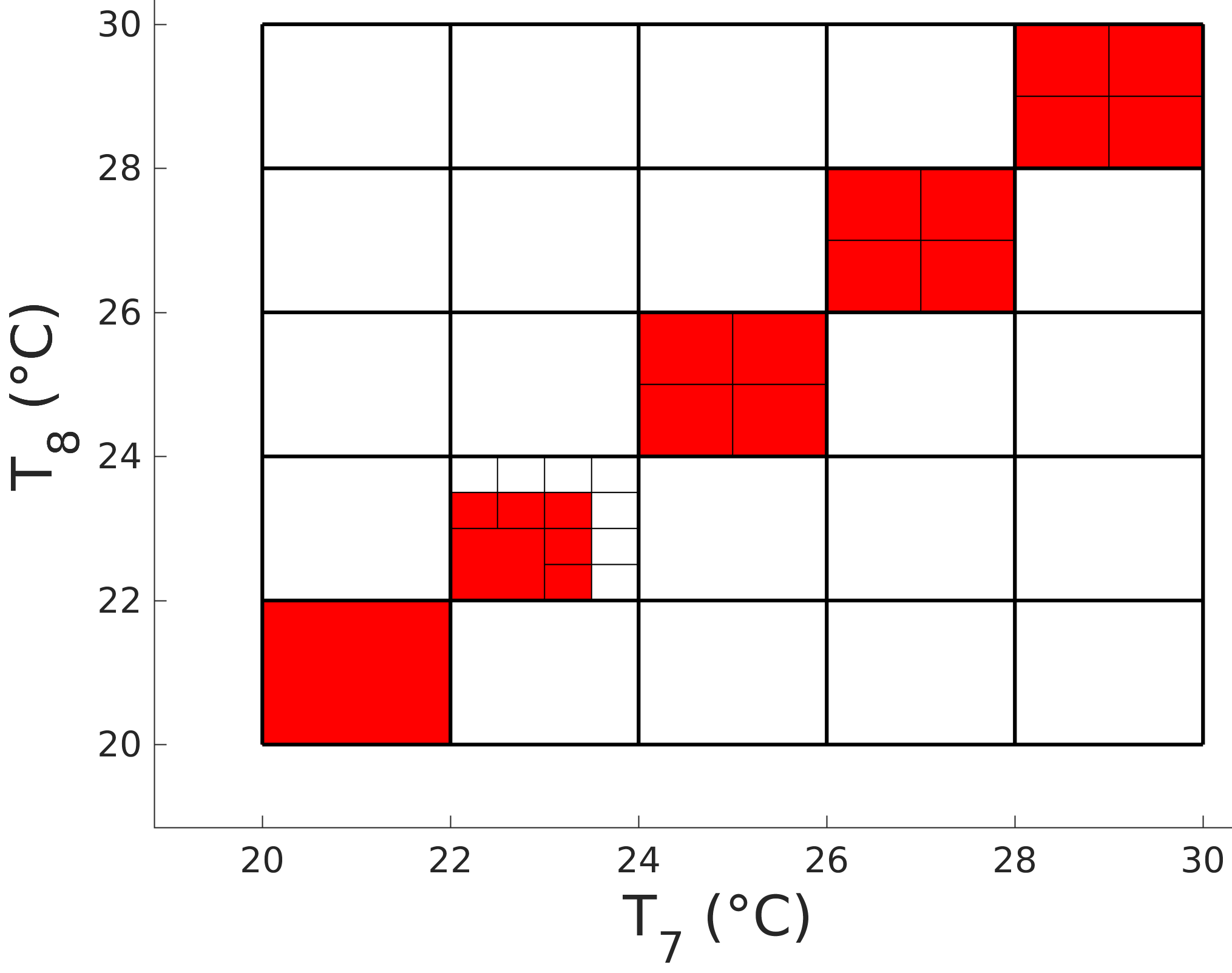}
        \caption{$I_3=I_3^c=J_3=\{7,8\}$}
        \label{fig sub3}
    \end{subfigure}
    \begin{subfigure}[b]{\proportion\textwidth}
        \includegraphics[width=\textwidth]{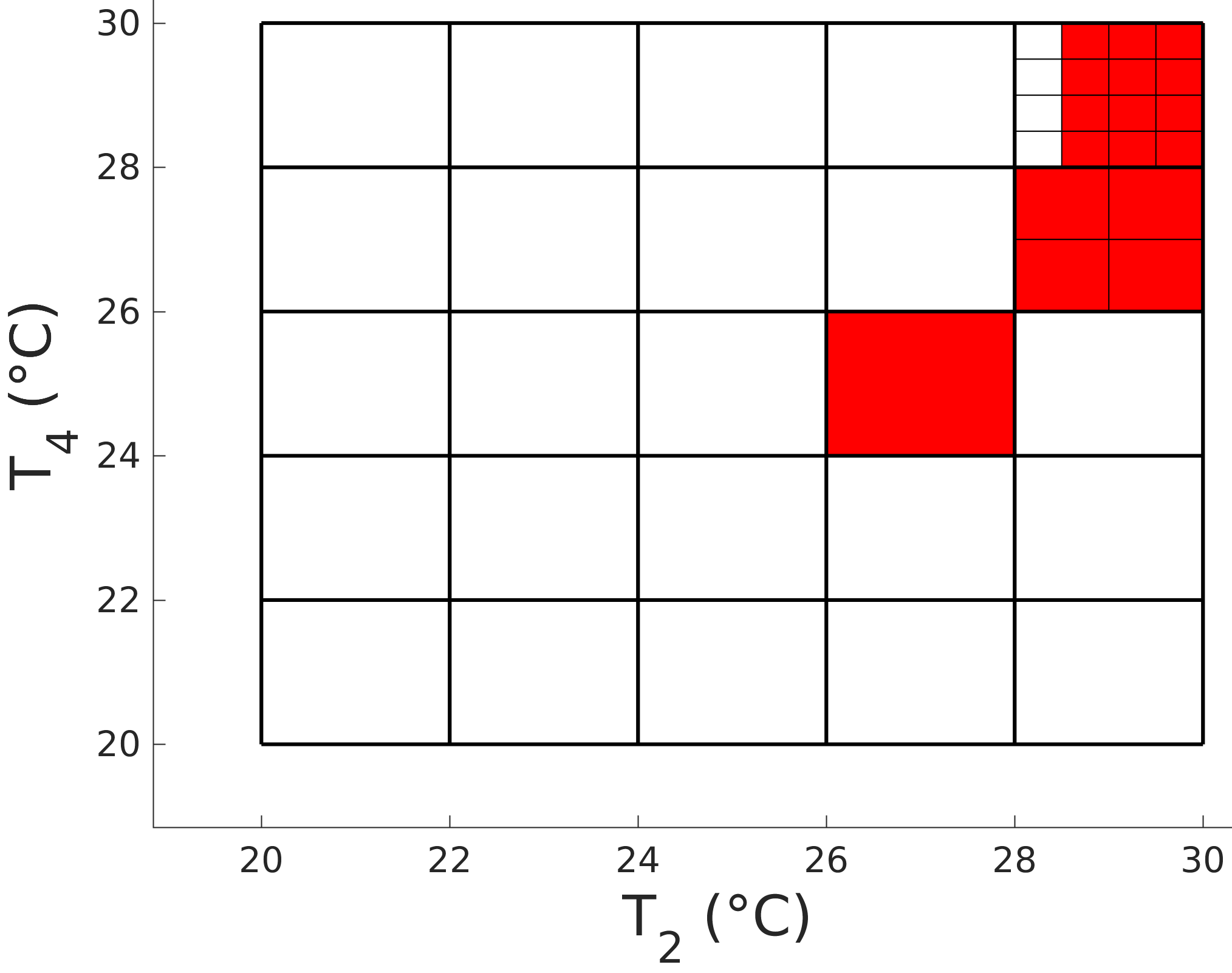}
        \caption{$I_4=\{2,4\}$, $I_4^c=J_4=\{2\}$}
        \label{fig sub4}
    \end{subfigure}
    
    \begin{subfigure}[b]{\proportion\textwidth}
        \includegraphics[width=\textwidth]{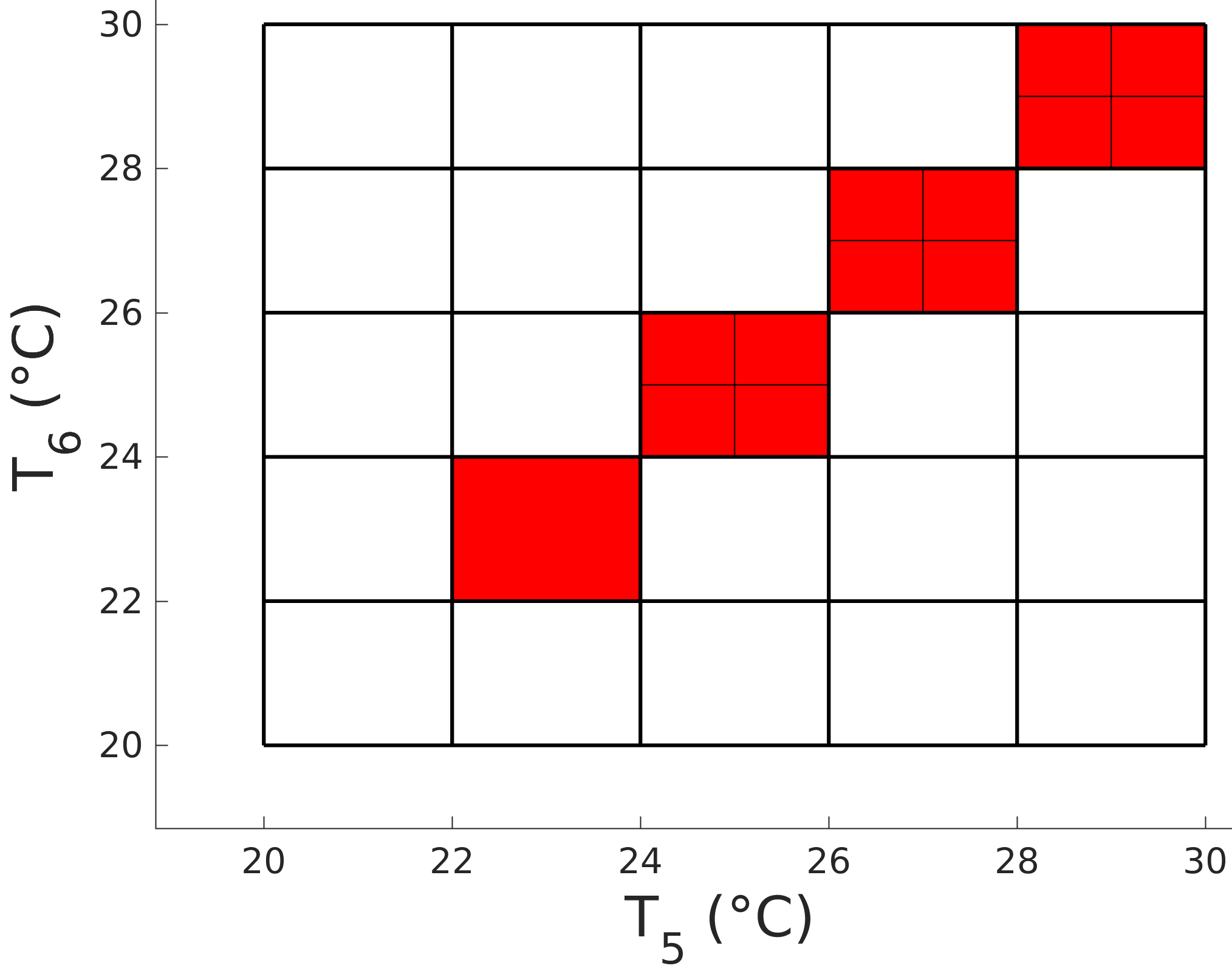}
        \caption{$I_5=\{5,6\}$, $I_5^c=J_5=\{5\}$}
        \label{fig sub5}
    \end{subfigure}
    \caption{Refined partitions and valid symbols (in red) for all $5$ subsystems.}\label{fig simu}
\end{figure}

Algorithm~\ref{algo refinement} is similarly applied to the other $4$ subsystems.
Note that since these subsystems do not satisfy Assumption~\ref{assum no repeat} (unlike subsystem $3$), the results (refined partition and valid symbols) are only partially visible in Figure~\ref{fig simu} due to overlapped cells.
The sequences of refined cells before termination of Algorithm~\ref{algo refinement} are as follows (for clarity of notation, the projections $\pi_{I_i}$ into the relevant state spaces are omitted):
\begin{itemize}
\item $S_1$: $\sigma^3$, $\sigma^2$, $\sigma^1$, $\sigma^2$, $\sigma^0$, $\sigma^1$, $\sigma^0$,
\item $S_2$: $\sigma^3$, $\sigma^2$, $\sigma^1$, $\sigma^2$, $\sigma^1$, $\sigma^0$, $\sigma^0$, $\sigma^1$, $\sigma^2$, $\sigma^0$, $\sigma^1$,
\item $S_4$: $\sigma^3$, $\sigma^2$, $\sigma^1$, $\sigma^2$, $\sigma^0$, $\sigma^1$, $\sigma^0$,
\item $S_5$: $\sigma^3$, $\sigma^2$, $\sigma^1$, $\sigma^0$.
\end{itemize}
We can observe that a larger number of refinements are required in subsystem $2$ due to its weaker control power in room $6$ (only $75\%$ of $u_6$) and the additional disturbance in room $6$ created by $u_8$ (which is not controlled by this subsystem).
Using Matlab on a laptop with a $2.6$ GHz CPU and $8$ GB of RAM, these results after applying Algorithm~\ref{algo refinement} for all subsystems were obtained in $36$ seconds.
As a comparison, for the abstraction refinement algorithm applied in a centralized approach (no decomposition and a single abstraction representing the whole system), the algorithm was still running after more than $64$ hours of computation without yet reaching a result.

\subsection{Complexity comparison}
\label{sub simu complexity}
The complexity reduction induced by the compositional abstraction refinement approach proposed in this paper can be further illustrated by comparing the number of evaluations of the over-approximation operator $RS_i^{AG2}$ in (\ref{eq reachable set Si AG2}) (or $\overline{RS}$ in (\ref{eq reachable set centralized}) for a centralized case with $m=1$) in various abstraction methods.
In Table~\ref{table comparison reachable set}, we compare $4$ such methods: the compositional abstraction refinement from the present paper, the centralized abstraction refinement that can appear as a particular case of this paper with a single subsystem $m=1$ and an abstraction creation without refinement, both in the centralized and compositional cases as in~\cite{meyer2015phd}.
For these comparisons to be meaningful, we consider that the last three methods are computed with a partition corresponding to the finest elements of the refined partition from the first approach (i.e.\ each cell of the initial coarse partition $P$ is split into $2^4=16$ elements per dimension, since $\pi_{I_2}(\sigma^1)$ is refined $4$ times in the simulation above).

\begin{table}[htb]
\centering
\begin{tabular}{ | c || c | c |}
\hline
{\bf \# over-approximations to compute} & Centralized & Compositional\\\hline\hline
Abstraction (no refinement) & $6.55*10^{20}$ & $5.44*10^{5}$\\\hline
Abstraction refinement & $6.74*10^{15}$ & $9933$\\\hline
\end{tabular}
\caption{Number of evaluations of the over-approximation of a reachable set (\ref{eq reachable set centralized}) or (\ref{eq reachable set Si AG2}) for four abstraction methods.}
\label{table comparison reachable set}
\end{table}

For both cases without refinement in the top row of Table~\ref{table comparison reachable set}, the indicated number is the exact number of evaluations of the over-approximation operator required to create the whole abstractions.
For the centralized abstraction refinement, the value is an upper bound on the real number of evaluations since we stop checking other values of the control input as soon as one is found to be valid for a given symbol.
In addition, it is likely that a coarser satisfying partition can be found with the centralized approach due to the considered model being more accurate than the ones used in the compositional approach which only deals with partial representations.
Finally, for the compositional abstraction refinement, the value in Table~\ref{table comparison reachable set} corresponds to the exact number of evaluations in the simulation described in Section~\ref{sub simu results}.

From these results, we can thus observe that the abstraction refinement approach reduces the computational burden in three ways.
Firstly, since the refinement is guided by the specification $\psi$, the obtained abstractions $S_i$ are among the coarsest that provide satisfaction of $\psi$: all cells in $P\backslash\{\sigma^0,\dots,\sigma^r\}$ are left unexplored since they are not relevant to $\psi$ and we stop refining those in $\{\sigma^0,\dots,\sigma^r\}$ as soon as a satisfying path of $S_i$ is found.
The second point which saves both computation time and memory space is the fact that Algorithm~\ref{algo refinement} never actually creates or updates any abstraction: the refinement is done directly on the partitions $X_i$ and the transitions are checked using the over-approximation operator $RS_i^{AG2}$ but the list of successors is never stored.
The last point is that satisfying controllers $C_i$ are obtained from the refinement procedure which means that even at the end of Algorithm~\ref{algo refinement}, we avoid creating the final abstraction and iterating through it for the controller synthesis.

In the particular case of compositional abstraction refinement, an additional complexity reduction is obtained by the decomposition of the dynamics in order to work on lower-dimensional subsystems.
Note however that while the compositional approach is always faster than the centralized one in the case without refinement~\cite{meyer2015phd}, it is not always true in the case of abstraction refinement.
Indeed, the loss of information in the subsystems of the compositional method may require the algorithm to refine more before finding a satisfying solution, which can overcome the complexity reduction from the decomposition.
For this reason, the compositional abstraction refinement approach is particularly adapted to large but weakly coupled systems.





\section{Conclusion}
\label{sec conclu}
In this paper, we have presented a novel approach to abstraction creation and control synthesis in the form of a compositional specification-guided abstraction refinement procedure.
This approach applies to any nonlinear system associated with a method to over-approximate its reachable sets and any control objective formulated as a sequence of locations (in the state space) to visit.
The dynamics are first decomposed into subsystems representing partial descriptions of the system and a finite abstraction is then created for each subsystem through a refinement procedure starting from a coarse partition of the state space.
Each refined abstraction is associated with a controller and we prove that combining these local controllers can enforce the specification on the original system.
The efficiency of the proposed approach compared to other abstraction and synthesis methods was then illustrated in a numerical example showing that this approach is particularly suited to large and weakly coupled systems.

Current efforts aim at combining this approach with specification revision methods into a common framework whose objective is to select for each subsystem which approach is the most advantageous between the abstraction refinement and the specification revision.
Another interesting research direction is to consider a \emph{refinement} on the assume-guarantee obligations (instead of the partition as in this paper), where the controller obtained on a subsystem could be used to shrink the obligations associated with its controlled states, and then send these refined obligations to other subsystems which would then update their abstractions.

\section*{References}

\bibliographystyle{abbrv}
\bibliography{Literature} 

\end{document}